\def\BibTeX{{\rm B\kern-.05em{\sc i\kern-.025em b}\kern-.08em
    T\kern-.1667em\lower.7ex\hbox{E}\kern-.125emX}}
\newcommand{\meta}{\text{X}}
\newcommand{\IR}{\mathbf{R}}
\newcommand{\IB}{\mathbf{B}}
\newcommand{\BX}{\mathbf{B}_\meta}
\newcommand{\calI}{{\cal I}}
\newcommand{\calL}{{\cal L}}
\newcommand{\calO}{{\cal O}}
\renewcommand{\paragraph}[1]{{\medskip\noindent\bf #1.}}
\newtheorem{thm}{Theorem}
\newtheorem{lem}{Lemma}
\begin{document}

\title{On the Susceptibility of QDI Circuits to\\Transient Faults\thanks{This research was partially supported by the project ENROL (grant I 3485-N31) of the Austrian Science Fund (FWF), the Doctoral College on Resilient Embedded Systems (DC-RES), the ANR project DREAMY (ANR-21-CE48-0003), and the French government's excellence scholarships for research visits.}
}


\author{Raghda El Shehaby\inst{1}\orcidID{0009-0000-6653-9074}
\and
Matthias F\"ugger\inst{2}\orcidID{0000-0001-5765-0301}
\and
Andreas Steininger\inst{1}\orcidID{0000-0002-3847-1647}
}

\authorrunning{El Shehaby \emph{et al.}}

\institute{
  TU Wien, Institute of Computer Engineering\\
  \and
  CNRS \& LMF, ENS Paris-Saclay, Université Paris-Saclay \& Inria
}

\maketitle

\begin{abstract}
By design, quasi delay-insensitive (QDI) circuits exhibit hi\-gher resilience against timing variations as compared to their synchronous counterparts.
Since computation in QDI circuits is event-based rather than clock-triggered, spurious events due to transient faults such as radiation-induced glitches, a priori are of higher concern in QDI circuits.

In this work we propose a formal framework with the goal to gain a deeper understanding on how susceptible QDI circuits are to transient faults.
We introduce a worst-case model for transients in circuits.
We then prove an equivalence of faults within this framework and use this result to provably exhaustively check a widely used QDI circuit, a linear Muller pipeline, for its susceptibility to produce non-stable output signals.

\keywords{transient faults, QDI circuits, automatic evaluation}

\end{abstract}

\section{Introduction}
\label{sec:intro}

A transient fault in a circuit is a temporarily incorrect value at a circuit's signal, e.g., induced by radiation.
It is well known that synchronous (i.e., clocked) circuits exhibit a natural resilience against transient faults through masking. Specifically, the relevant effects are electrical masking (short fault pulses are filtered by low-pass behavior of subsequent gates and interconnect), logical masking (depending on other input levels, the logic level of the faulty input may be irrelevant for the gate output) and temporal masking (the flip-flop samples its data input at the active clock edges while ignoring faults that happen between these). However, synchronous circuits have little resilience against (fault) effects that impact the timing.
By contrast, asynchronous (i.e., self-timed, handshake-based) and in particular quasi delay-insensitive (QDI) \cite{martin1986compiling} circuits exhibit large, ideally unlimited, tolerance against timing variations by construction. This is due to their event-driven operation principle.
Unfortunately, this very event driven operation makes them prone to transient faults. While electrical masking and logical masking mitigate fault effects like in the synchronous case, it is not obvious whether considerable temporal masking occurs. Previous works have shown that asynchronous pipelines, e.g., have data accepting windows during which they are susceptible to fault pulses. The size of these windows depends on several parameters, most notably the mode of pipeline operation (bubble-limited/balanced/token-limited). For unbalanced operation these windows may reach considerable size, making the circuit clearly more susceptible to faults than in the synchronous case with its instantaneous sampling. That is why several mitigation methods \cite{bainbridge2009glitch} aim at minimizing the data accepting windows. In any case there is some effect equivalent to temporal masking, and most often it is constituted by Muller C-elements (MCEs): During the \emph{combinational} mode of operation (matching inputs), the MCE ignores fault pulses on any input and not even a pulse at the output can flip its state. In \emph{storage} mode (non-matching inputs), however, the MCE's state can be easily flipped by a fault pulse at one of the inputs or at the output (directly at the keeper). So apparently, the share of time during which an MCE is in combinational mode determines the masking provided by it. In a reasonably complex practical setting, however, this insight is hard to map to a general prediction of the whole circuit.

Given an asynchronous circuit, a natural question thus is at which signals and at which times the circuit is susceptible to a transient fault.
In this paper we present an approach to efficiently and provably exhaustively answer this question. 

\paragraph{Organization}
We discuss related work in Section~\ref{sec:relatedwork} and introduce our circuit model in Section~\ref{sec:model}.
In Section~\ref{sec:results} we start with basic consistency results
  of the model, followed by our main technical result: the definition of value regions in executions along with a proof of the equivalence of glitches within those regions (Theorem~\ref{thm:main}).
Based on this result we then present our tool for sensitivity-window exploration (Section~\ref{sec:tool}) and apply it to
a  widely used QDI circuit for illustration.
We conclude in Section~\ref{sec:conclusion}.


\section{Related Work}
\label{sec:relatedwork}

\paragraph{Transient faults in asynchronous circuits}
Several studies have explored the effects of transient faults on asynchronous circuits. Detection and mitigation techniques with some form of redundancy have been proposed alongside.

The authors in~\cite{lafrieda2004fault} perform a thorough analysis of single-event transient (SET) effects, among other types of faults, in QDI circuits. The fault's impact is first presented at the gate level, then on communication channels, translating the fault to a deadlock. They also discuss other possible errors (synchronization failure, token generation, and token consumption).
An efficient failure detection method for QDI circuits is presented in~\cite{peng2005efficient}. The method brings the circuit to a fail-safe state in the presence of hard and soft errors. The authors investigate the probability for a glitch to propagate through a state-holding element in asynchronous circuits.
In~\cite{monnet2007formal}, the authors propose a formal method to model the behavior of QDI circuits in the presence of transient faults. They use symbolic simulation to provide an exhaustive list of possible effects and analyze which of these cases are theoretically reachable. Their model, however, does not support delay parameters, which potentially reduces the set of reachable states, further improving the resistance of a design against single-event upsets (SEUs).
They also discuss in~\cite{monnet2005asynchronous} the Muller C-element fault sensitivity and specify a global sensitivity criterion to SETs for asynchronous circuits. The work provides a behavioral analysis of QDI circuits in the presence of faults.
With the help of signal transition graphs (STGs), the authors in~\cite{bainbridge2009glitch} informally analyze SEUs due to glitches on QDI network-on-chip links. Several mitigation techniques are proposed with a focus on reducing the latch's sensitive window to a glitch.
Some of these techniques are tested and compared against other proposed variations in \cite{huemer2020QDIwindows}, \cite{behal2021towards}, and \cite{tabassam2022set}. The assessment there is based on extensive fault injection simulations into different QDI buffer styles, in order to identify the main culprits of the circuit. The authors provide a quantitative analysis to determine the windows of vulnerability to SETs and the impact of certain parameter choices on the resilience of the circuit. However, the analysis is done based on a regular timing grid, which causes linear complexity in time and in resolution, and cannot exclude the potential of overlooking relevant windows between the grid points.

\paragraph{Hazards in PRSs}
QDI circuits can be modeled on different levels. The Production Rule Set (PRS), introduced by Martin~\cite{martin1986compiling}, is a widely-used low-level representation that can be directly translated to a CMOS transistor implementation. PRSs do not normally support hazards, and by guaranteeing \emph{stability} and \emph{non-interference} characteristics~\cite{jang2005seu}, a PRS execution is assumed to be hazard-free. The authors consider an SEU as flipping of a variable's value and model it in so called transition graphs to identify deadlock or abnormal behavior.
\cite{katelman2012rewriting} extends the semantics of PRSs in order to be able to address hazards as circuit failures, but it is limited to checking the hazard-freedom property of a circuit. 

These papers are focused on the \emph{possibility} of failure and are restricted to precedence of events, without explicitly considering timing.
Our work enables further propagation of what we define as a glitch in order to check whether it has reached the final outputs of a circuit and, based on actual timing information, \emph{quantify} this proportion of failure.


\section{Model}
\label{sec:model}

Following the work by Martin \cite{martin1986compiling}, we model a circuit as a set of production rules.
We extend the model by delays and propagation of non-Boolean values.
We start with definitions of signal values and production rules in our context. 

\paragraph{Signal and signal values}
Signals are from a finite alphabet $S$.
Signals have values that may change over time.
We extend the values a signal may attain from the classical Boolean values $\IB = \{0,1\}$ to the three-valued set $\BX = \{0,\meta,1\}$, where $\meta$ is a potentially non-binary value.
Examples for non-binary values are glitches, oscillations, and metastable values.
A signal that has value $\meta$ may, however, be $0$ or $1$.

We will make use of logical operations like $\wedge$ and $\neg$ on the extended domain $\BX$.
If not stated otherwise, we resort to the semantics of the 3-valued Kleene logic, introduced by Goto for these operations; see \cite{brzozowski2001algebras}.
In short, using the classical algebraic interpretation of Boolean formulas on $\{0,1\} \subset \IR^+_0$  where, $\neg a \equiv 1 - a$, $a \wedge b \equiv \min(a,b)$, and $a \vee b \equiv \max(a,b)$, one obtains the Kleene semantics by the correspondence $\meta \equiv 1/2$.
For example, one obtains, $1 \wedge \meta = \meta$ and $1 \vee \meta = 1$.

\paragraph{Production rules}
A production rule is a guarded Boolean action with delay.
It is of the form
\begin{align}
G \rightarrow s=1 \ [d] \quad\text{ or }\quad G \rightarrow s=0 \ [d]\enspace,
\end{align}
where the guard $G$ is a logical predicate on signals, $s$ is a signal, and $d \in (0,\infty)$ is the propagation delay.
Intuitively, a production rule with guard $G$,
 action $s = b$, where $b \in \{0,1\}$,
 and delay $d$ sets signal $s$'s value to $b$ upon predicate $G$ being true for $d$ time.

\paragraph{Circuit}
A circuit is specified by:
\begin{itemize}
\item Finite, disjoint sets of input, local, and output signals, denoted by $\calI$, $\calL$, and $\calO$.

\item Initial values for all local and output signals.
  We write $s(0)$ for the initial value of signal $s \in \calL \cup \calO$.

\item A set of production rules $R$ whose guards are predicates on the circuit's signals and whose actions involve only local and output signals.
We require that (i) for each signal $s$, there is at most one production rule that sets $s$ to $1$, and at most one that sets $s$ to $0$,
and (ii) guards of production rules that set a signal $s$ are mutually exclusive for all signal values from $\IB$. 
\end{itemize}

Similarly to Martin \cite{martin1986compiling} we use production rules to model gates: actions that set a value to $1$ correspond to the pull-up stack of a gate and actions that set a value to $0$ to the pull-down stack.
Any meaningful circuit will further have the properties that any local and output signal appears in a production rule that sets it to $0$ and one that sets it to $1$; if not, the signal will remain at its initial value for all times.
Further, as already demanded in the last bullet above, the guards of these opposing production rules will not both evaluate to true for any choice of signal values; if not, the pull-up and pull-down stacks of this gate will drive the gate's output at the same time.

\paragraph{Signal trace}
A signal trace for signal $s \in S$ is a function $v_s: \IR_0^+ \to \BX$
  mapping the time $t$ to the value of $s$ at time $t$.
By slight abuse of notation, we write $s(t)$ for $v_s(t)$.
We restrict signal traces to contain only finitely many value-changes in each finite time interval.

\paragraph{Execution}
It remains to define how a circuit, with a given input, switches signal values.
For that purpose fix a circuit, input signal traces for all its inputs~$I$, and a time~$T > 0$ until which the execution is to be generated.

Intuitively an \emph{execution induced by the circuit and the input signal traces} is inductively generated via applying the production rules to the current signal values.
If a guard of a production rule is true, its action is scheduled to take place after the rule's delay.

Care has to be taken to handle instability of guards.
If a guard that results in a scheduled action on a signal, but whose action has not yet been applied, becomes false, we remove the scheduled action and instead set the signal to $\meta$ after a small delay $\varepsilon > 0$.
An $\varepsilon$ smaller than the rule's delay
accounts for the fact that non-binary outputs can propagate faster than full-swing transitions.
The signal's value $\meta$ is then propagated accordingly throughout the circuit.
Indeed we will let $\varepsilon \to 0$ in later sections to account for the
  worst case behavior of gates.

Formally, the \emph{execution prefix until time $T$, induced by the circuit and the input signal traces}, is a signal trace prefix until time $T$ for each local and output signal obtained as follows:
\begin{enumerate}
\item Initially, all signals are set to their initial values as specified by the circuit.
Further, the current time $t = 0$, and the set of scheduled actions is empty.

\item Handle unstable guards:
\begin{itemize}
\item For each production rule whose action $s = b$, with $b \in \IB$, currently being scheduled: if the rule's guard evaluates to $0$ or $\meta$, and $s(t) \neq b$ (we say the guard is unstable), then remove the event from the scheduled events and set $s = \meta$. \emph{(generate-$\meta$)}
\end{itemize}

\item Apply actions:
\begin{itemize}
\item For each action $s = v$, with $v \in \BX$, scheduled for time $t$, set $s(t) = v$ and remove the action from the scheduled actions.
\end{itemize}

\item Schedule actions:
\begin{itemize}
\item For each production rule: if its guard evaluates to $1$, schedule the rule's action $s = b$ to take place after the rule's delay $d$, i.e., at time $t + d$ (unless $s(t) = b$ already).

\item For each production rule: if its guard evaluates to $\meta$ and the rule's action is $s = b$ with $s(t) \neq b$, schedule the action $s = \meta$ for time $t + \varepsilon$ (unless $s(t) = \meta$ already). \emph{(propagate-$\meta$)}
\end{itemize}

\item Advance time $t$ to the nearest future time at which an action is scheduled or an input signal switches value.
If $t \geq T$, return the local and output signal traces until time $T$; otherwise, continue with step 2.
\end{enumerate}

One observes that an execution prefix until time $T' > T$ is an extension of an execution prefix until time $T$: for each local and output signal $s$, the signal values in both prefixes are identical within $[0,T]$.
We may thus speak of the execution as the limit of execution prefixes until times $T \to \infty$.

\subsection{Example}
As an example let us consider the circuit with input signal $i$, no local signals, and output signal $o$.
As initial value we choose $o(0) = 1$.
The circuit comprises of a single inverter with input $i$, output $o$, and delay $1.0$, i.e., the circuit's production rules are:
\begin{align}
i &\rightarrow o=0 \ [1.0] \label{eq:Rd}\\
\neg i &\rightarrow o=1 \ [1.0]\enspace \label{eq:Ru}.
\end{align}

We consider three input traces: (a) Initially $i(0) = 0$, then $i$
  transitions to $1$ at time $1$ where it remains.
  (b) Prefix like (a), but the input transitions back to $0$ at time $1.5$.
  (c) Like (b), but with value $\meta$ during times $[1,1.5)$.
  
The execution prefixes until time $T = 4$ induced by the above
  circuit and the input signal traces (a), (b), and (c) are depicted
  in Figure~\ref{fig:ex}.

\begin{figure}[h]
    \centering
    \includegraphics[width=0.2\columnwidth]{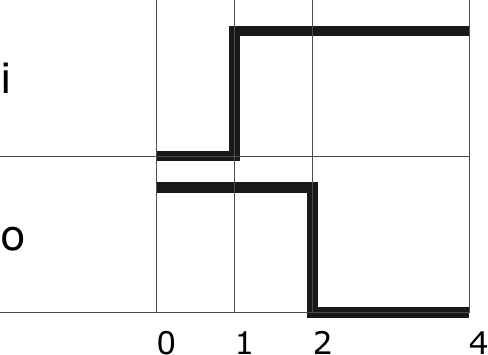}
    ~
    \includegraphics[width=0.2\columnwidth]{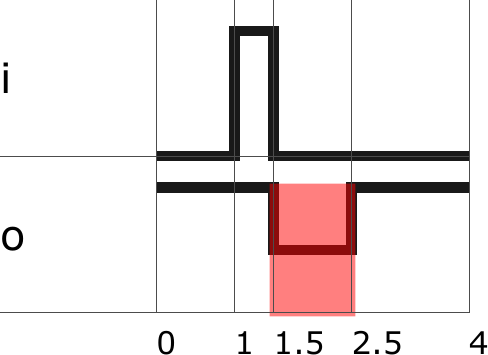}
    ~
    \includegraphics[width=0.2\columnwidth]{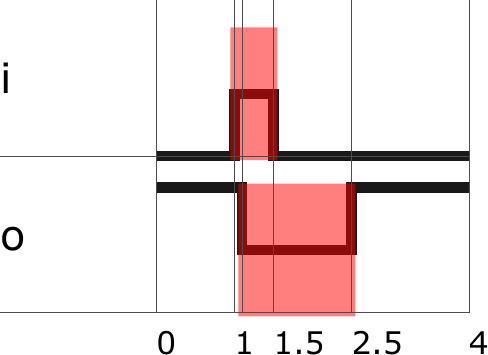}
    \caption{Execution prefixes until time $T=4$ of an inverter with input {\tt i} and output {\tt o}. 
    Signal value $\meta$ is depicted as a value of $0.5$ and marked red.
    The propagation delay $\varepsilon$ for signal value $\meta$ is set to $0.1$. Left: input signal trace (a). Middle: input signal trace (b). Right: input signal trace (c).}
    \label{fig:ex}
\end{figure}

In the example, input traces~(a) and (b) result in the guard of rule~\eqref{eq:Rd}
  becoming true at time $1$.
Accordingly, an action to set $o = 0$ is scheduled for time $1 + d = 2$.
While in input trace~(a), the guard remains true until time $2$,
  and thus $o$ is set to $0$ at time $2$,
  in input trace~(b), the guard is falsified at time~$1.5$,
  resulting in the action being canceled and $o$ is set to $\meta$ at time~$1.5$ (generate-X in the algorithm).
  
For input trace~(b), we have that the guard of rule~\eqref{eq:Ru} becomes
  true at time $1.5$.
Accordingly the action $o = 1$ is scheduled for time $1.5 + d = 2.5$.
Since the guard remains true until time $2.5$, the action is applied
  resulting in $o(2.5) = 1$.

Finally, input trace (c) demonstrates the algorithmic rule propagate-X in step 5: the $\meta$ value at the input is propagated with propagation delay $\varepsilon = 0.1$ to the output.
Resetting the output to $1$ at time $2.5$ occurs as for input trace (b).


\section{Results}
\label{sec:results}

\subsection{Well-defined executions}

We start with a basic result on the consistency of an execution as defined by the algorithm.

\begin{lem}\label{lem:welldefined}
Any signal trace of an execution has at most finitely many value-changes within a finite time interval; it is thus a well-defined signal trace.
\end{lem}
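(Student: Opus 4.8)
The plan is to prove the lemma by a \emph{forward induction over time windows} of a fixed positive length $\delta$: for each integer $k\ge 0$ I would show that every signal trace of the execution has only finitely many value-changes in $[0,k\delta]$. The statement for an arbitrary finite interval then follows at once, since any finite interval is contained in some $[0,k\delta]$, and a signal trace with finitely many value-changes in every finite interval is by definition well-defined. The two structural facts this rests on are: (i) an execution involves only finitely many signals, and every signal that is not an external input is generated by a component of the algorithm (say, a delay channel) whose output at time $t$ is determined by its input signals up to time $t-\delta_C$ for some $\delta_C>0$ --- so, taking $\delta$ to be the minimum of these finitely many $\delta_C$, one gets a uniform $\delta>0$; and (ii) by hypothesis the external input signals are themselves well-defined signal traces, and in particular have finitely many value-changes in every finite interval.

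The key auxiliary observation is a \emph{no-creation-from-nothing} (causality) property: if an internal signal $\sigma$ has a value-change at time $t$, then that change is a consequence of a value-change of one of the finitely many input signals of $\sigma$'s component at some time $\le t-\delta$; and each individual input value-change triggers only a bounded number of output value-changes (at most one in the basic case, since the output value at any instant is a function of the input values at the correspondingly earlier instant). Combining these, for any internal signal $\sigma$ and any $k\ge 1$, the number of value-changes of $\sigma$ in the window $(k\delta,(k+1)\delta]$ is at most a fixed constant times the total number of value-changes of the input signals of $\sigma$'s component within $[0,k\delta]$.

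The induction now runs as follows. Let $P(k)$ be the assertion that every signal trace of the execution has finitely many value-changes in $[0,k\delta]$; $P(0)$ is immediate. Assuming $P(k)$: external input traces satisfy $P(k+1)$ by (ii); and for an internal signal $\sigma$, the bound of the previous paragraph together with $P(k)$ applied to the finitely many input signals of $\sigma$'s component shows that $\sigma$ has finitely many value-changes in $(k\delta,(k+1)\delta]$, hence --- using $P(k)$ for the part up to $k\delta$ --- finitely many value-changes in all of $[0,(k+1)\delta]$. This establishes $P(k+1)$, completing the induction and the proof.

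I expect the real work to lie entirely in establishing fact (i) together with the causality observation from the algorithm's actual definition: in particular, pinning down that the algorithm guarantees a strictly positive delay through \emph{every} component --- so that it precludes zero-delay dependency cycles, which could otherwise oscillate arbitrarily fast and destroy the claim --- and making precise the sense in which every output value-change is accounted for by an earlier input value-change without any multiplicity blow-up. Once these are in hand the window induction is routine; note that it is precisely the uniform positive delay $\delta$ that lets the inductive step cross feedback cycles, since the hypothesis $P(k)$ already controls \emph{all} signals, including those that feed back into $\sigma$'s component, on $[0,k\delta]$.
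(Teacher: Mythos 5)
Your proposal is correct and rests on exactly the same core fact as the paper's proof: a uniform strictly positive minimum propagation delay $d_{\min}>0$ over all components, which bounds the number of events in any window of that length and hence yields finiteness on any finite interval. The paper merely packages this as a short contradiction argument using the algorithm's scheduler invariant (at most one pending action per non-input signal, boundedly many per input signal, and every newly scheduled action landing at least $d_{\min}$ later), whereas you unfold it as a forward window induction with a component-level causality bound --- the two are the same argument in different clothing.
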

\begin{proof}
Assume by contradiction that a signal trace has infinitely many value-changes within a finite interval $[t,t'] \subset \IR$.
By consistency of prefixes of executions, this implies that the algorithm returns
  an execution with infinitely many value-changes when setting $T = t'$.

In the algorithm, at any point in time $\tau$ there is at most one action per non-input signal in the set of scheduled actions and at most bounded many actions per input signal until time $T$.
Observing that there is a minimum propagation delay $d_{\min} > 0$ for signal values $0$, $1$, and $\meta$, any newly scheduled action must occur at earliest at time $\tau + d_{\min}$.
Thus, only bounded many actions occur within $[\tau,\tau + d_{\min}]$.
The statement follows.\qed
\end{proof}

\subsection{A transient-fault insertion tool}

To study the effect of short transient faults on the behavior of circuits we extend the algorithm from Section~\ref{sec:model} to allow the insertion of external events: signal transitions from a set of \emph{external events} are applied at the end of step~3.
Step~5 is changed to include external events when updating time~$t$ to the time of the next event.
A transient fault then corresponds to two subsequent signal transitions of the same signal in the set of external events in our model.
This is less general than transient faults in physical implementations, where a transient fault, e.g., induced by an additional charge due to a particle hit, can lead to a single early transition by happening just before a valid signal transition.
The assumption, however, is conservative in the sense that we assume that such a charge is small enough to lead to a pulse, i.e., double transition, potentially violating a gate's stability condition: in fact we will later in Section \ref{sec:equivalence-transient-faults} assume that transient faults are not necessarily full-swing binary pulses and have value $\meta$ in our model.

We have implemented the algorithm in Python and shall discuss results for a widely-used QDI circuit component, a linear pipeline, in the following.

\paragraph{Linear pipeline}
To study the susceptibility of QDI circuits to transient faults, we used the tool to insert short pulses (glitches) at different times.
As a prototypical QDI circuit, we used the linear 3-stage Muller pipeline shown in Figure~\ref{fig:muller3_linear}. 

\begin{figure}[h]
    \centering
    \includegraphics[width=0.6\columnwidth]{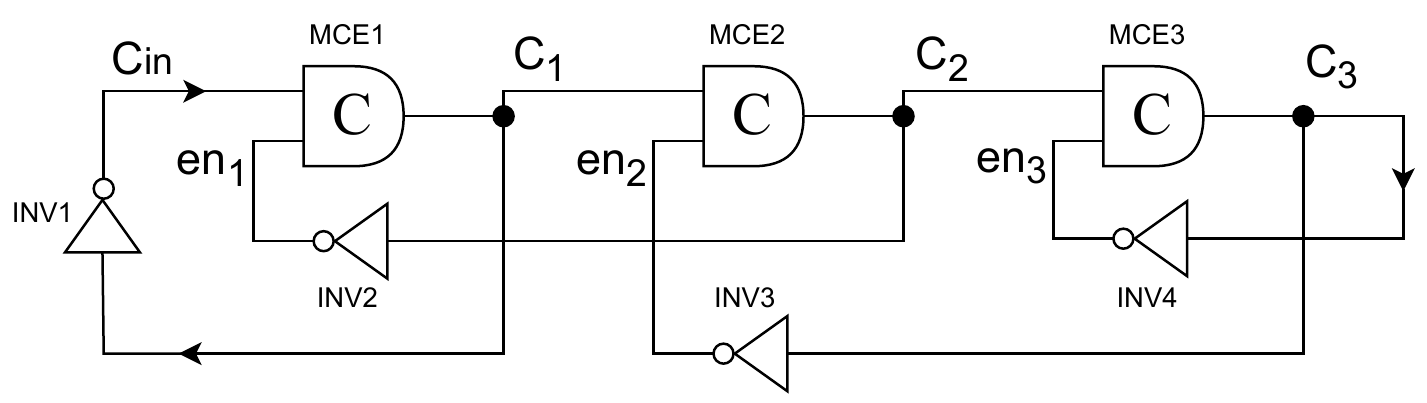}
    \caption{Linear Muller pipeline with 3 stages. The delays are set to 1 ({\tt INV2, INV3}), 5 (C gate), 4 (source delay = {\tt INV1}), and 4 (sink delay = {\tt INV4}).}
    \label{fig:muller3_linear}
\end{figure}

Delays have been uniformly set to 1 for the two pipeline inverters {\tt INV2} and {\tt INV3}, to 5 for all Muller C-elements ({\tt MCE1} to {\tt MCE3}), and 4 for the leftmost inverter {\tt INV1} and the rightmost inverter {\tt INV4}, which model the source and sink of the pipeline, respectively. 
Figure~\ref{fig:trace} shows an execution prefix until time $T=32$ in absence of transient faults, generated by our tool.
\begin{figure}[h]
    \centering
    \includegraphics[width=0.7\columnwidth]{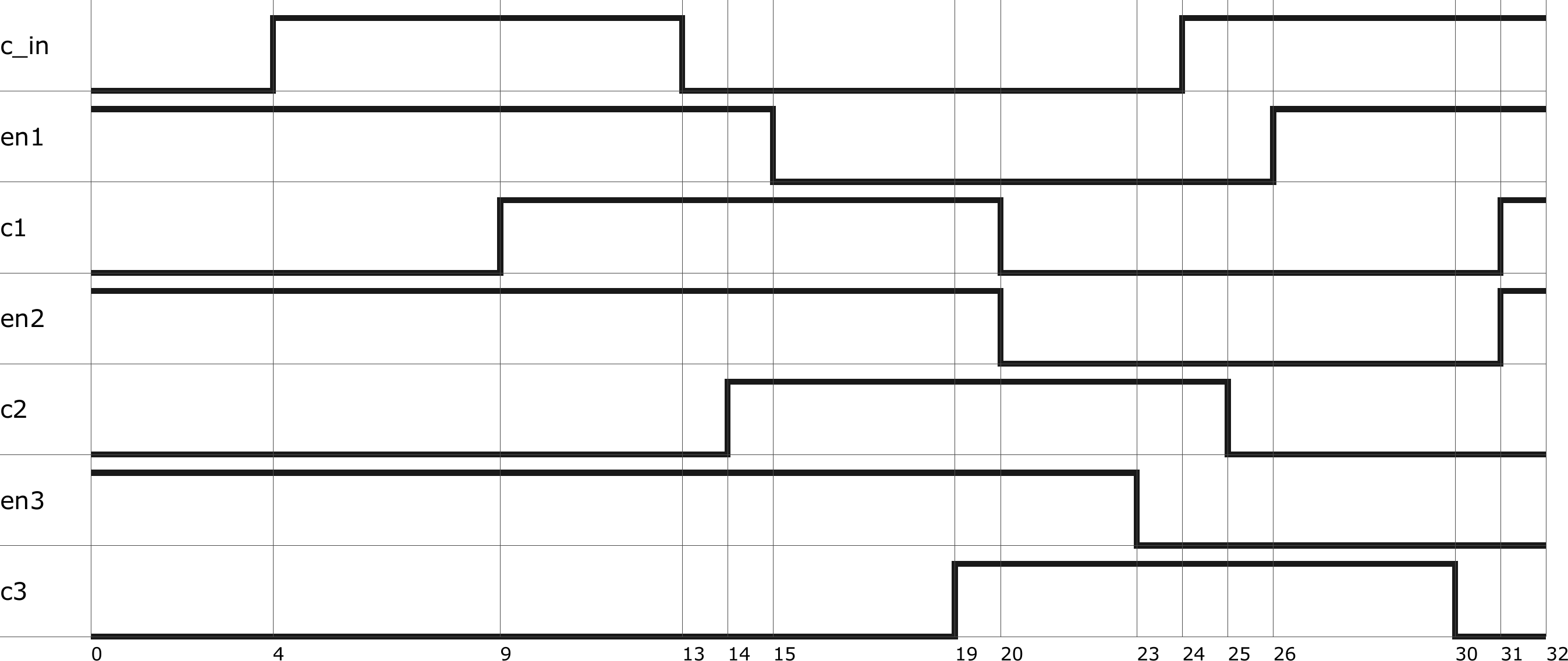}
    \caption{Execution prefix of linear 3-stage pipeline until time $T=32$.}
    \label{fig:trace}
\end{figure}
Figures~\ref{fig:trace-glitch} and \ref{fig:trace-glitch-ii} show execution prefixes of the same circuit until time $T=32$ when a glitch of width $0.1$ is inserted at the same signal, {\tt c2}, at different points in time:
  the intervals during which a signal has value $\meta$ are marked in red.
One observes that the behavior is different in presence of the glitch, as detailed in the following.

\begin{figure}[h]
    \centering
    \includegraphics[width=0.7\columnwidth]{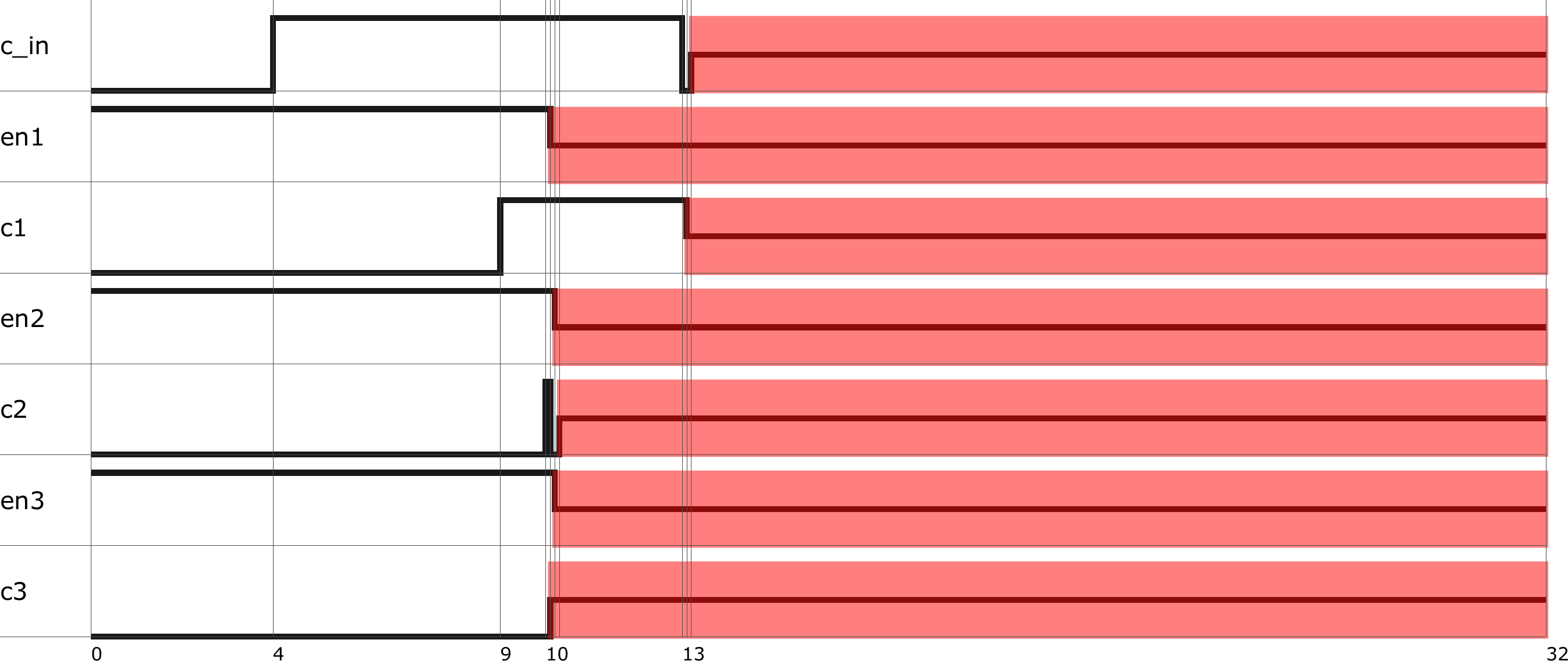}
    \caption{Execution prefix of linear 3-stage pipeline until time $T=32$ with glitch of width $0.1$ inserted at time $10$ at signal {\tt c2}.}
    \label{fig:trace-glitch}
\end{figure}

\emph{Non-masked glitch.} In Figure~\ref{fig:trace-glitch}, the glitch occurs at the input of the MCE while it is in storage mode, i.e., non-matching inputs. Since the other stable input {\tt en3} is at a different logic level than the MCE output {\tt c3}, the $\meta$ value is generated at the latter signal, and subsequently propagates through the circuit.

\begin{figure}[h]
    \centering
    \includegraphics[width=0.7\columnwidth]{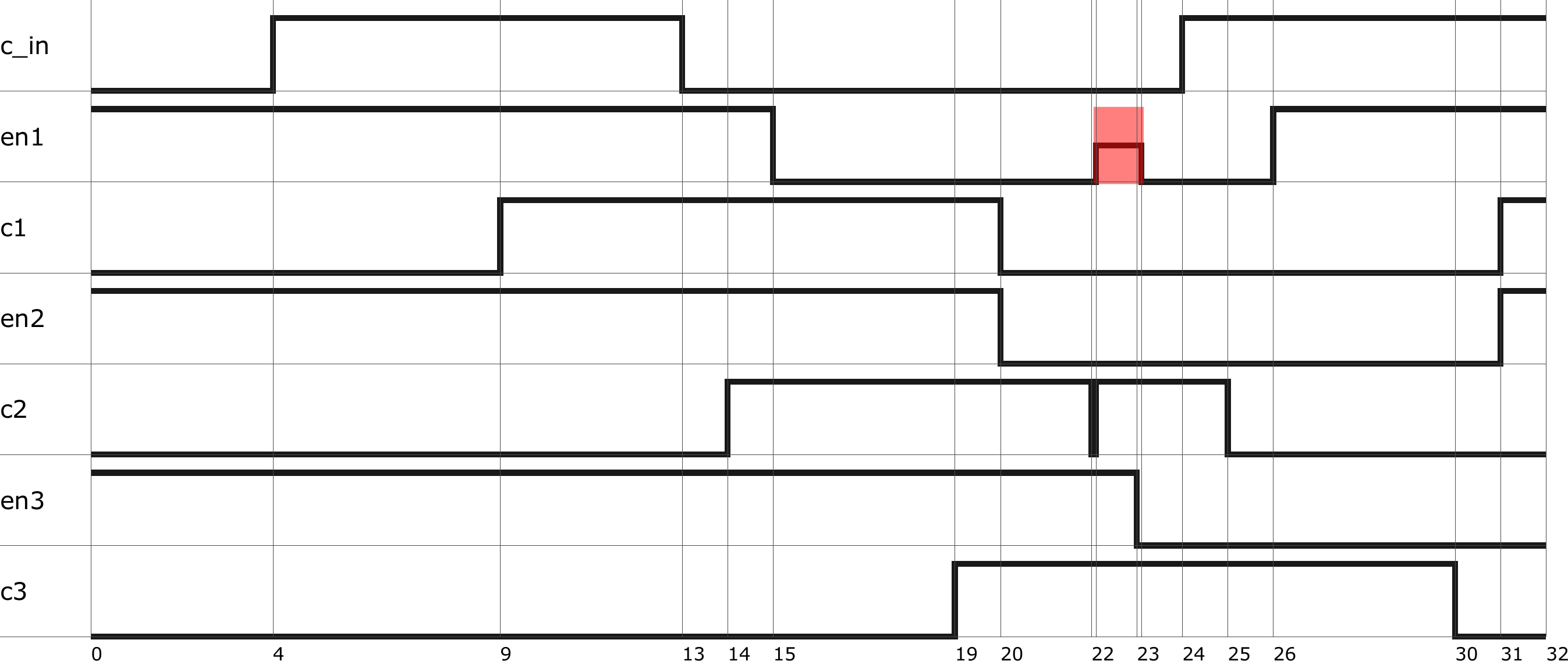}
    \caption{Execution prefix of linear 3-stage pipeline until time $T=32$ with glitch of width $0.1$ inserted at time $22$ at signal {\tt c2}.}
    \label{fig:trace-glitch-ii}
\end{figure}

\emph{Masked glitch.} By contrast, the glitch in Figure~\ref{fig:trace-glitch-ii} occurs at the MCE input while in combinational mode, i.e., matching inputs. The glitch is masked at the output {\tt c3}, but the $\meta$ value appears for a short period of time at {\tt en1} (since an inverter propagates its input value). During this time span, the $\meta$ value appeared and disappeared while the other MCE was also in combinational mode, hence was prevented from propagating the unstable value further in the circuit.

\paragraph{Susceptibility to transient faults}
The two different behaviors raise the question of when a QDI circuit like the linear pipeline can mask glitches successfully, and when it is susceptible to them.
To address that, we relate susceptibility to the occurrence of glitches at signals of particular interest (typically the output signals to the environment).
We call these signals of interest the \emph{monitored signals}.
For example, in the linear pipeline, signals {\tt c1} and {\tt c3} are the outputs to the environment represented by the source on the left and the sink on the right.

In general, let $C$ be a circuit, and $i$ an input signal trace.
Let $M \subseteq \calL \cup \calO$ be the set of monitored signals.
Then, $(C,i)$ is \emph{susceptible to a glitch (of width $w$) at signal $s \in \calI \cup \calL \cup \calO$ at time $t$}, if there exists a signal $m \in M$ and a time $t'$ such that in the execution, induced by the circuit $C$ and the input signal traces $i$ and with a glitch (of width $w$) at signal $s$ and time $t$, it is $m(t') = \meta$.

Revisiting the example of the linear pipeline, and letting $M = \{{\tt c1}, {\tt c3}\}$ be the set of monitored signals, the pipeline with its input is susceptible to a glitch (of width $0.1$) at signal {\tt c2} at time $10$, but not at time $22$ (see Figures \ref{fig:trace-glitch} and \ref{fig:trace-glitch-ii}).

This directly leads to the question of the sensitivity windows, i.e., the times when a circuit with an input is susceptible and when not.
Related, if combined with a probability measure on faults occurring at these times, one may ask how likely a transient fault is to cause a glitch at a monitored signal.
We address both questions in the following.

\subsection{Equivalence of transient faults}
\label{sec:equivalence-transient-faults}

While the previous tool allows one to sample the susceptibility at particular times, such an approach has several drawbacks: (i) it is time consuming to generate such sweeps and (ii) small susceptible windows may be overlooked.

In the following we present an alternative approach that relies on showing the equivalence between certain transient faults.
We begin the analysis with a definition.
We say \emph{signal $s$ has a pulse at time $t$ of width $w > 0$} if $s$ changes value at time $t$, remains constant within time $[t,t+w)$, and changes value at time $t+w$.
A $v$-pulse, with $v \in \BX$, is a pulse that has value $v$ during $[t,t+w)$.
We speak of a \emph{transient fault} as an $\meta$-pulse that has width of at most some predefined small $\gamma > 0$.

We are now ready to show a monotonicity property of the value $\meta$ in executions: If transient faults are added to a circuit's execution, the resulting execution differs from the original one at most by turning Boolean values into $\meta$.
For example, a transient fault may not result in a later 0-1 transition.

\begin{thm}[Monotonicity of $\meta$]\label{thm:monotonicity}
Let $C$ be a circuit and $i$ be input traces.
Let~$e$ be the execution induced by circuit~$C$
  and input traces~$i$,
  and $e'$ the execution induced by circuit $C$
  and input $i$ in presence of transient faults.
Then for all signals $s$ and times $t$,
if $s(t) \in \IB$ in~$e'$, then~$s(t)$
  is identical in~$e$ and~$e'$.
\end{thm}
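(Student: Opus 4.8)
The plan is to recast the statement as a \emph{monotonicity} claim with respect to a partial order on signal values in which $\meta$ is the greatest element. For $v,v'\in\BX$ put $v\sqsubseteq v'$ iff $v=v'$ or $v'=\meta$, and extend $\sqsubseteq$ pointwise to the value of every signal at every time; write $e\sqsubseteq e'$ for the induced comparison of executions. Then $e(s,t)\sqsubseteq e'(s,t)$ holds exactly when $e'(s,t)=\meta$ or $e(s,t)=e'(s,t)$, i.e.\ exactly when ``$s(t)\in\IB$ in $e'$ implies $s(t)$ agrees in $e$ and $e'$''. Hence the theorem is equivalent to $e\sqsubseteq e'$, and it suffices to show that inserting transient faults can only raise an execution in $\sqsubseteq$.

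The first ingredient is a purely combinational monotonicity fact. A gate computing a Boolean function $f$ acts on possibly-metastable inputs through the metastable closure $f_\meta$, which returns the common value of all binary resolutions of its argument tuple if that value is unique and $\meta$ otherwise. This closure is $\sqsubseteq$-monotone: raising an argument (turning some Boolean entries into $\meta$) only enlarges the set of binary resolutions, so an output already forced to a Boolean value stays forced to that value. Input signals carry the same trace $i$ in $e$ and in $e'$, so they agree outside fault windows and $e'$ merely carries $\meta$ inside them; and a transient fault contributes only external events that drive a signal to $\meta$ during a fault window and, at the window's end, return the signal to the control of its driving gate --- it never introduces a fresh Boolean value of its own.

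The second ingredient is a ``no earliest divergence'' argument over time. Assume $e\not\sqsubseteq e'$ and let $t^\star$ be the infimum of the times $t$ with $e(s,t)\not\sqsubseteq e'(s,t)$ for some signal $s$; by Lemma~\ref{lem:welldefined} each trace has only finitely many value-changes in a bounded interval and there are finitely many signals, so (traces being piecewise constant) $t^\star$ is attained, witnessed by some $s$. If $s$ is an input there is nothing to prove. If $s$ is driven by a gate $g$ with propagation delay $\delta\ge d_{\min}>0$, then either $t^\star$ lies in a fault window of $s$, so $e'(s,t^\star)=\meta$ and $e(s,t^\star)\sqsubseteq e'(s,t^\star)$ trivially; or the value of $s$ at $t^\star$ in each execution equals $f_\meta$ applied to the values of $g$'s inputs at the strictly earlier time $t^\star-\delta$, where by minimality of $t^\star$ those input tuples satisfy $e\sqsubseteq e'$ componentwise, so $\sqsubseteq$-monotonicity of $f_\meta$ gives $e(s,t^\star)\sqsubseteq e'(s,t^\star)$. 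Either way $t^\star$ is not a divergence time --- a contradiction. (Equivalently, one can run this as an induction processing the merged event streams of $e$ and $e'$ in time order.)

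I expect the main obstacle to be grounding the functional identity used above --- ``a gate's output now equals $f_\meta$ of its inputs one delay ago, unless overridden by a fault'' --- in a clean invariant of the simulation algorithm. This is immediate for pure delays, but if the algorithm schedules with value-dependent or interval-valued delays one must additionally argue that the scheduling choices made in $e$ and in $e'$ stay compatible, so that a change to $\meta$ in $e'$ does not ``outrun'' the corresponding Boolean change in $e$ in a way that breaks $\sqsubseteq$. The second delicate point is the trailing edge of a fault window: one must check that returning the signal to its driving gate reinstates a value that is $\sqsubseteq$-consistent with $e$ rather than a stale Boolean value, and this is precisely where the induction hypothesis at time $t^\star-\delta$ is needed. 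The attainment of the infimum $t^\star$ is a minor technicality settled by Lemma~\ref{lem:welldefined}.
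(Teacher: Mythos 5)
Your high-level skeleton matches the paper's: both proofs take an earliest time $t^\star$ at which the claimed relation between $e$ and $e'$ fails and derive a contradiction, and your order $\sqsubseteq$ with $\meta$ on top is just a compact restatement of the theorem. The divergence is in how the contradiction is obtained at $t^\star$, and that is exactly where your argument has a gap that you yourself flag but do not close. You reduce the step at $t^\star$ to the identity that the value of $s$ at $t^\star$ in each execution equals $f_\meta$ applied to the values of the driving gate's inputs at the single earlier time $t^\star-\delta$, and then invoke $\sqsubseteq$-monotonicity of the metastable closure. But the semantics here is not a pure transport delay: gates are production rules whose actions are \emph{scheduled} when a guard becomes true and can be \emph{canceled}, and the propagation delay is value-dependent --- the paper explicitly treats the delay $\varepsilon$ of the value $\meta$ as a parameter separate from the Boolean delays (cf.\ Theorem~\ref{thm:main}, and the minimum $d_{\min}$ over the delays for $0$, $1$, $\meta$ in Lemma~\ref{lem:welldefined}). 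In the critical case ($s(t^\star)=\meta$ in $e$ but $s(t^\star)=b$ in $e'$) the two executions may therefore be sampling the gate's inputs at \emph{different} past times, so there is no common $t^\star-\delta$ at which to apply the induction hypothesis componentwise; and even with equal delays one must still rule out that the action producing $b$ is scheduled in $e$ but canceled there. Your proposal names this obstacle (``scheduling choices \dots stay compatible'') but leaves it as an expectation rather than an argument, and it is the crux of the proof, not a technicality.

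The paper closes precisely this hole by arguing at the level of rules rather than of a delayed function: if $s$ is set to $b$ in $e'$, some rule with guard $G$ and action $s=b$ was triggered before $t$, so $G$ was true in $e'$ at that earlier time; since guards are evaluated on Boolean signal values, minimality of $t$ transfers the truth of $G$ to $e$ at the same time, so the same action is scheduled in $e$, contradicting $s(t)=\meta$ in $e$; mutual exclusivity of guards then disposes of the case of two distinct Boolean values. If you want to keep the $f_\meta$-monotonicity framing, you must replace the fixed-delay functional identity by this guard-transfer invariant (or prove that scheduling and cancellation decisions in $e'$ are dominated under $\sqsubseteq$ by those in $e$); as written, the step at $t^\star$ does not go through in the model actually used.
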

\begin{proof}
Assume, by means of contradiction, that the statement does not hold and let $t$ be the smallest time at
which executions~$e$ and~$e'$ do not fulfill the theorem's statement.
Then there is a signal $s$ such that $s(t) = b \in \IB$
  in execution $e'$ and $s(t) = v \neq b$ in execution~$e$.
We distinguish between two cases for value~$v$:

\medskip

Case $v = \meta$.
If so, in execution $e$, signal $s$ was set to $\meta$
  at some time $\tau \leq t$, and not set again
  within $[\tau,t]$.
By minimality of $t$, and the definitions of $e$ and $e'$,
  $s$ was also set to $\meta$ in $e'$ at time $\tau$ (or earlier).
It follows that in execution $e'$,
  signal $s$ was set to $b$ within $(\tau,t]$.
This implies that a rule with guard $G$ and action $s = b$ was triggered at a time before $t$, and thus
$G$ was true in execution $e'$.
By minimality of $t$ and the definitions of $e$ and $e'$,
  $G$ must have been also true in $e$, resulting in the
  same action being scheduled also in $e$;
  a contradiction to the assumption that $v = \meta$.

\medskip

Case $v = \neg b$.
If so, $s$ was set via two different rules in~$e$ and~$e'$ and not set to another value until time $t$.
This implies that mutually exclusive guards have evaluated to $1$ in $e$ and $e'$ before time $t$; a contradiction to the minimality of $t$ in combination with the theorem's statement.

\medskip

The theorem's statement follows in both cases.\qed
\end{proof}

We next define time intervals that play a central role in a circuit's behavior in presence of transient faults.

Given a circuit $C$ and an execution $e$ of $C$,
the \emph{set of value switching times}, $V_{C}(e)$, is the set of times $\tau_0 = 0, \tau_1, \dots$ at which a signal in execution $e$ switches value.
A \emph{value region of execution $e$} is an interval $[t,t') \subset \IR$, where $t,t'$ are consecutive value switching times of execution $e$.
A \emph{postfix of a value region $[t,t')$} is a (potentially empty) interval $[t'',t') \subseteq [t,t')$.

\begin{thm}\label{thm:main}
Let $C$ be a circuit,
$i$ be input traces,
$\gamma > 0$ the width of a transient fault,
and $\varepsilon > 0$ the propagation delay of
  value $\meta$.
Let $e$ be the execution induced by circuit $C$
  and input traces~$i$.
  
Then, for a signal~$s$ of the circuit, 
  and a value region $R$ of execution $e$,
  the set $\Sigma_s(R)$ of times $t \in R$ such that $(C,i)$ is susceptible to a transient fault (of width $\gamma$) at signal $s$ at time $t \in R$ converges to a postfix of $R$ as $\varepsilon \to 0$ and $\gamma \to 0$.
\end{thm}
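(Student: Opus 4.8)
The plan is to exploit the fact that within a value region $R = [t_R, t_R')$ the whole circuit sits in a fixed configuration, so that the effect of a transient fault at signal $s$ at a time $t \in R$ is, up to a time shift, independent of $t$ — except for a thin boundary layer near the right end $t_R'$, where the fault's $\meta$ still lingers when the legitimate transitions of $e$ at $t_R'$ occur. Write $n$ for the number of gates of $C$. First I would invoke Theorem~\ref{thm:monotonicity} to reduce the problem to tracking $\meta$-occurrences only: the faulty execution $e'$ agrees with $e$ everywhere except that the fault turns some signals to $\meta$ on some subintervals; call this family of $\meta$-occurrences the \emph{$\meta$-cone} of the fault. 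Since the $\meta$-front advances by at least $\varepsilon > 0$ per gate, within $R$ (where no Boolean transition is pending before $t_R'$) the $\meta$-cone either \emph{resolves} — every signal is Boolean again — by time roughly $t + n\varepsilon + \gamma$, or it is \emph{latched} by a storage element (e.g.\ a C-element in storage mode, as in Figure~\ref{fig:trace-glitch}) and then persists until $t_R'$; in the latched case the set of latched-$\meta$ signals is again determined only by the static configuration of $R$ and the value of $s$, not by $t$.

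Next I would prove that on the \emph{interior} $[t_R, t_R' - O(n\varepsilon + \gamma))$ the predicate ``$(C,i)$ is susceptible to a fault at $s$ at time $t$'' does not depend on $t$. For two interior times $t < t''$, the executions with the fault placed at $t$ and at $t''$ coincide after a time shift by $t'' - t$ on the interval carrying the $\meta$-cone: the configuration of $R$ is constant, and — crucially — by Theorem~\ref{thm:monotonicity} an interior fault can neither cancel nor alter any Boolean transition, so in particular the legitimate transitions scheduled at $t_R'$ survive intact (a deleted transition would be a non-$\meta$ discrepancy, which the theorem forbids). Hence a monitored signal $m \in M$ becomes $\meta$ with the fault at $t$ iff it does with the fault at $t''$; in the latched case the same conclusion follows because the latched set, and therefore the evolution across and beyond $t_R'$, is $t$-independent. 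Consequently the interior part of $\Sigma_s(R)$ is either all of $[t_R, t_R' - O(n\varepsilon+\gamma))$ or empty.

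It remains to analyse the boundary layer $B = [t_R' - O(n\varepsilon+\gamma),\, t_R')$. For $t \in B$ the $\meta$-cone is still active at $t_R'$, so the legitimate transitions of $e$ at $t_R'$ take place in a partially $\meta$-corrupted state and may themselves generate $\meta$ that propagates into later regions and reaches $M$. The claim to establish is that the susceptible subset of $B$ is a \emph{suffix} $[t^*, t_R')$: the later the fault, the more of the $\meta$-cone is still present at $t_R'$, and a later fault should therefore be at least as susceptible as an earlier one. The cleanest route is to pass to the $\varepsilon \to 0$ limit first, in which propagation is instantaneous and, for every $t$ with $t_R' - \gamma < t \le t_R'$, the \emph{entire} $\meta$-cone is corrupted precisely on $[t, t+\gamma) \ni t_R'$; then the state just after $t_R'$, and hence whether $M$ is ever hit, is the same for every $t \in (t_R' - \gamma, t_R']$, while for $t \le t_R' - \gamma$ the fault is transient and already handled by the interior argument. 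Combining the interior dichotomy with this boundary behaviour, $\Sigma_s(R)$ converges to $[t_R, t_R')$ when the interior is susceptible, and otherwise to $[t^*, t_R')$ with $t^* \in B$, hence $t^* \to t_R'$ as $\varepsilon, \gamma \to 0$ (possibly the empty postfix); in every case the limit is a postfix of $R$.

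I expect the boundary-layer step to be the main obstacle. Theorem~\ref{thm:monotonicity} only compares a faulty execution with the fault-free one, not two faulty executions whose faults sit at different times, and for a fixed positive $\varepsilon$ the sets of signals carrying $\meta$ at $t_R'$ for different fault times need not be nested, so the ``later $\Rightarrow$ at least as susceptible'' monotonicity is not immediate; making it rigorous requires either the $\varepsilon\to 0$ limiting description sketched above or a dedicated monotonicity lemma for executions ordered by their $\meta$-content. A smaller technical point, again fed by Theorem~\ref{thm:monotonicity}, is to check that a transient fault never permanently removes a scheduled Boolean transition, and to pin down the $O(n\varepsilon + \gamma)$ bound on the temporal extent of a non-latched $\meta$-cone from the minimum-delay observation already used in Lemma~\ref{lem:welldefined}.
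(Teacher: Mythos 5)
There is a genuine gap, and it sits exactly where you suspected it would. Your architecture — a $t$-independent interior plus a boundary layer of width $O(n\varepsilon+\gamma)$ that collapses as $\varepsilon,\gamma\to 0$ — rests on the assumption that the entire effect of an interior fault resolves within $O(n\varepsilon+\gamma)$ of the fault time. But only the \emph{onset} of $\meta$ propagates with delay $\varepsilon$; the \emph{recovery} of a signal back to a Boolean value after the fault ends goes through the gates' nominal production-rule delays, which do not vanish as $\varepsilon,\gamma\to 0$. Consequently the window in which a fault's lingering $\meta$ (or a cancelled-and-rescheduled Boolean action, now due at roughly $t+\gamma+d$ for a gate delay $d$) races against the anchored legitimate transitions at the region's right boundary $t'$ has width governed by the gate delays, not by $\varepsilon$ and $\gamma$. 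Within that non-vanishing window the outcome genuinely depends on $t$, so the interior is not all-or-nothing and your ``pass to the $\varepsilon\to 0$ limit first'' argument does not collapse the boundary layer to a point. Indeed, the paper explicitly remarks (Section~\ref{sec:tool}) that regions are \emph{not} in general fully susceptible or fully non-susceptible, which is incompatible with the dichotomy your interior argument would establish; your time-shift argument also silently assumes that the full state — including the absolute firing times of pending scheduled actions — is shift-covariant, which it is not once the fault interacts with actions anchored at $t'$.

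The missing ingredient, which the paper supplies as Lemma~\ref{lem:tau1M-to-tau2M}, is precisely the ``dedicated monotonicity lemma for two faulty executions'' you name as the obstacle: for $\tau_1\le\tau_2$ in the same value region (both at distance at least $|C|\varepsilon+\gamma$ from $t'$), every signal that becomes $\meta$ in some value region under the fault at $\tau_1$ also becomes $\meta$ there under the fault at $\tau_2$. The proof is not a time-shift but a comparison: both faults initially corrupt the same set of signals (same static configuration), and the later fault leaves strictly less time for any guard to become true and restore a Boolean value before the anchored events at $t'$, so every recovery available to the $\tau_2$-execution is also available to the $\tau_1$-execution; this is then propagated inductively through subsequent value regions. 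Susceptibility is therefore upward-closed in $t$ on $[t,\,t'-|C|\varepsilon-\gamma]$, which is exactly the postfix property, and only the remaining sliver of width $|C|\varepsilon+\gamma$ vanishes in the limit. Your appeal to Theorem~\ref{thm:monotonicity} is fine for what it gives (a fault never produces a wrong Boolean value relative to $e$), but, as you observe, it cannot order two faulty executions; replacing the interior/boundary decomposition by this one-sided monotonicity is the step your proposal is missing.
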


This means that every value region can be split into two intervals per signal: the left part of the region that contains the times at which the circuit
is not susceptible, and the right part where it is susceptible to faults.
Either part/interval can be empty.

\paragraph{Proof}
In the following fix $C$, $i$, $\gamma$, $\varepsilon$, execution $e$, signal $s$, and value region $R$ of execution $e$.
We first show a monotonicity property within a value region.

\begin{lem}\label{lem:tau1M-to-tau2M}
Let $R = [t,t')$ be a value region of execution $e$ and
  $s$ a signal.
Further, let $e_1$ and $e_2$ be executions of $C$
  with the same input traces as $e$, but with
  $e_1$ additionally having transient faults within $R$ at $s$ up to some time $\tau_1 \in R$ and $e_2$
  having at least one transient at a time $\tau_2 \in R$ at $s$, where $\tau_1 \leq \tau_2 \leq t' - |C| \varepsilon - \gamma$.

Then for all value regions $R'$ of execution $e$ and all signals $s'$, if~$s'$ has value $\meta$ at some time within $R'$ in execution $e_1$, then it does so at some time within $R'$ in execution $e_2$.
\end{lem}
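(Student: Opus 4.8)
The plan is to split the value regions $R'$ of $e$ into three classes — those lying strictly before $R$, the region $R$ itself, and those lying after $R$ — and to treat each separately. Write $M(f,R')$ for the set of signals taking value $\meta$ at some time in $R'$ in an execution $f$; the goal is $M(e_1,R')\subseteq M(e_2,R')$ for every $R'$. Since $e_2$ is $e$ with transient faults added, Theorem~\ref{thm:monotonicity} gives $M(e,R')\subseteq M(e_2,R')$, so it suffices to handle the fault-induced $\meta$-values, i.e.\ the signals that are $\meta$ somewhere in $R'$ in $e_1$ but never $\meta$ there in $e$. Applying Theorem~\ref{thm:monotonicity} to $e_1$ as well, one sees that within $R$ every fault-induced $\meta$-value of $e_1$ sits on a signal of the set $\Gamma$ of signals to which a $\meta$-pulse at $s$ can propagate under the (constant, by definition of a value region) signal assignment of $e$ on $R$. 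For $R'$ strictly before $R$ the claim is immediate: all faults of $e_1$ lie in $R$, so $e_1$ coincides with $e$ before $R$.

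For $R'=R$ the hypothesis $\tau_2\le t'-|C|\varepsilon-\gamma$ does the work: the fault of $e_2$ at $\tau_2$ holds $s=\meta$ until $\tau_2+\gamma$, and this $\meta$ then propagates to every signal of $\Gamma$, each reachable from $s$ by a path of at most $|C|$ edges; since each $\meta$-transition propagates within time $\varepsilon$, every signal of $\Gamma$ takes value $\meta$ by time $\tau_2+\gamma+|C|\varepsilon\le t'$, hence at some time inside $R$. Thus $\Gamma\subseteq M(e_2,R)$, and together with $M(e_1,R)\subseteq M(e,R)\cup\Gamma$ and $M(e,R)\subseteq M(e_2,R)$ this settles $R'=R$.

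For regions $R'$ after $R$ the plan is to reduce the claim to a statement about the configuration at the single time $t'$ and then propagate it forward. First, extra faults of $e_2$ only enlarge $M(e_2,\cdot)$, and those after $t'$ do not affect the configuration at $t'$; and $e_1$ having several faults in $R$ reduces, by an induction on the number of faults (peeling off the earliest one, whose residual effect at $t'$ is subsumed by that of the later faults, as explained below), to $e_1$ having a single fault at $\tau_1$. So assume $e_1$ has one fault, at $\tau_1$, and $e_2$ one fault, at $\tau_2$. The two remaining steps are: (i) show that the configuration of $e_2$ at time $t'$ — signal values together with the queue of scheduled actions — $\meta$-dominates that of $e_1$, meaning every signal $\meta$ in $e_1$ is $\meta$ in $e_2$, every signal Boolean in $e_2$ has the same value in $e_1$, and the pending actions match up in the way needed for (ii); and (ii) show this domination persists for all times $\ge t'$. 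For (ii) there are no faults after $t'$, so the inputs of $e_1$ and $e_2$ agree there, and the argument is a near-verbatim adaptation of the proof of Theorem~\ref{thm:monotonicity}, with $e_1$ in the role of the nominal execution and $e_2$ the faulty one, the only difference being a non-identical but $\meta$-dominating configuration at the starting time $t'$ rather than identical ones. From (ii), $s'$ being $\meta$ in $e_1$ at any time $\ge t'$ forces $s'$ to be $\meta$ in $e_2$ there, giving $M(e_1,R')\subseteq M(e_2,R')$ for every $R'$ after $R$.

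The main obstacle is step (i). The guiding idea is that, since $e$ is static on $R$, the effect of a fault at $s$ depends only on the time elapsed since the fault, and that for elapsed times at least $\gamma+|C|\varepsilon$ — i.e.\ once the $\meta$-wavefront has stopped advancing — the $\meta$-valued signals (and the state of the action queue) only recede. As $\tau_1\le\tau_2\le t'-|C|\varepsilon-\gamma$, at time $t'$ both executions are past that threshold and $e_1$'s fault is no later than $e_2$'s, so $e_1$ has receded at least as far, which yields the domination; the same comparison, applied first among $e_1$'s own faults and then between $\tau_1$ and $\tau_2$, is what justifies the reduction to a single fault. The delicate point, and where I expect the real work, is establishing this ``recede-only'' behaviour for circuits with feedback loops, where $\meta$ can travel around a cycle and re-enter a signal it already left: one must show that once the wavefront has stopped, such a cycle either carries $\meta$ on its whole feedback-reachable part from then on, or loses it irrecoverably, but never refills after it has begun to empty. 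A secondary subtlety is that the actions $e$ has queued to fire at $t'$ are anchored to the absolute time $t'$, not to the fault, so one has to check that the receding $\meta$ interacts with them consistently with the domination — intuitively, a later fault merely postpones those actions further and so keeps $e_2$ ``at least as $\meta$'' as $e_1$.
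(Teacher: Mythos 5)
Your decomposition is essentially the paper's own argument: within the static value region $R$ the fault response depends only on the elapsed time since the fault, the bound $\tau_2 \le t'-|C|\varepsilon-\gamma$ guarantees the $\meta$-wavefront from either fault reaches the same set of signals before $t'$, and the claim is then pushed forward region by region. Your handling of $R'$ before $R$, of $R'=R$ via the reachable set $\Gamma$ and Theorem~\ref{thm:monotonicity}, and your forward induction for later regions all line up with the paper's proof, which likewise argues (i) both faults turn the same signals to $\meta$ within $R$, (ii) the earlier fault in $e_1$ leaves at most as many signals $\meta$ afterwards, and (iii) repeats the argument inductively over subsequent value regions.

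The problem is that the step you correctly identify as ``where I expect the real work'' --- the $\meta$-domination of $e_1$'s configuration by $e_2$'s at time $t'$, equivalently the ``recede-only'' behaviour of the fault-response process --- is declared but not proved, and it is the entire content of the lemma; without it your steps (i) and (ii) and your reduction from several faults to one are all circular appeals to the thing being shown. The paper closes this step not with a signal-level monotonicity invariant (which, as you note, is delicate in the presence of feedback) but with an event-level containment: a signal can only lose the value $\meta$ through a production rule firing, i.e.\ through an action that is scheduled when its guard becomes true and that survives uncancelled for the full gate delay; by the time-shift, any guard that becomes and stays true after the fault at $\tau_2$ in $e_2$ becomes true at the corresponding earlier absolute time after the fault(s) in $e_1$ and has at least as long before the boundary $t'$ to remain stable, so every recovery action that fires in $e_2$ also fires in $e_1$. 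This gives the containment of recovered signals directly, without having to argue that the set of $\meta$-signals of a single execution shrinks monotonically in time. Your worry about $\meta$ re-entering a signal around a feedback cycle is a fair criticism of how tersely the paper states this --- the one-sentence justification (``signals have the same Boolean values and remain stable for a longer time in $e_1$ than in $e_2$'') is carrying all the weight --- but to count as a proof your writeup would need to supply at least that argument rather than flag it as open.
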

\begin{proof}
Within the same value region, both transient faults
  cause the same signals to become $\meta$, given that $\tau_1,\tau_2$ are sufficiently far from the value region's boundary $t'$ to allow for propagation of $\meta$ with delay $\varepsilon$ (at most $|C|\varepsilon$ time):
  this follows from the fact that the circuit's signal
  values and set of scheduled actions are identical at the start of the first transient in $e_1$ and in $e_2$.

Further, a signal with value $\meta$ remains so unless it
  is set again to a Boolean value by a production rule.
This can only happen by its guard becoming true right after a transient fault.
Since, $\tau_1 \leq \tau_2$, and both times are in the same
  value region, any event scheduled (and not canceled) after the transient fault at $\tau_2$ must also be scheduled (and not canceled) after the transient faults that occur until time $\tau_1$:
  signals have the same Boolean values and remain stable for a longer time in $e_1$ than in $e_2$.

The argument is inductively repeated for each subsequent value region of execution $e$.\qed
\end{proof}

We are now in the position to show the section's main result.

\begin{proof}[Proof of Theorem \ref{thm:main}]
Letting $\varepsilon \to 0$ and $\gamma \to 0$, we have from Lemma \ref{lem:tau1M-to-tau2M} that if a transient fault at a signal $s$ at a time $\tau_1 \in R$
  that causes $\meta$ at a signal $s'$ then a transient fault at a signal $s$ at a time $\tau_2 \in R$, where $\tau_1 \leq \tau_2$, also causes $s'$ to become $\meta$ at some point in time.
The theorem's statement then follows from the definition of a postfix of $R$.\qed
\end{proof}

\subsection{Automated computation of susceptible regions}\label{sec:tool}

Theorem \ref{thm:main} directly leads to an algorithm that marks all
  sensitivity windows, i.e., susceptible times, within an execution prefix:
  for each non-output signal $s$, and for each value region $R$, it finds
  per bisection of repeatedly inserting transient faults
  the boundary of non-susceptible times (on the left within $R$)
  and susceptible times (on the right within $R$).
The algorithm's time complexity is determined by one bisection per region (with precision $\varepsilon > 0$), i.e., $\sum_R \log \frac{|R|}{\varepsilon}$, as opposed to a naive search that injects a fault every $\varepsilon > 0$ with a complexity inversely proportional to $\varepsilon$.
Moreover, the naive algorithm may miss susceptibility windows smaller than $\varepsilon$, while our algorithm provably finds all such windows.

To test the algorithm's use on typical circuit instances we have implemented it in Python \cite{github:async-and-faults}:
  given a circuit, input traces, the set of monitored signals,
  as well as a time $T$ until which an execution
  prefix is to be generated, it outputs a figure with all
  susceptible windows highlighted in blue as well as the percentage of
  the length of the susceptible windows in the execution prefix
  (by default excluding the monitored signals, but with the possibility to include them).
This value corresponds to the probability of a transient fault causing
  an $\meta$ value at a monitored signal, i.e., \emph{the probability to fail} $P(\text{fail})$, given a uniform distribution
  of a single transient on all times in the execution prefix and
  on all signals that are not monitored signals (by default; alternatively on all signals).
Clearly, though, the uniform distribution can be easily
  replaced by a more involved distribution.
Towards this direction, the tool also outputs the probability per signal.
This allows one to compute a weighted average, e.g., depending on driver strength
  or shielding at certain signals.
Figure \ref{fig:trace-check} shows the tool's output for the previous example
  of the 3-stage linear pipeline with sensitivity windows marked in blue.

\begin{figure}[h]
    \centering
    \includegraphics[width=0.7\columnwidth]{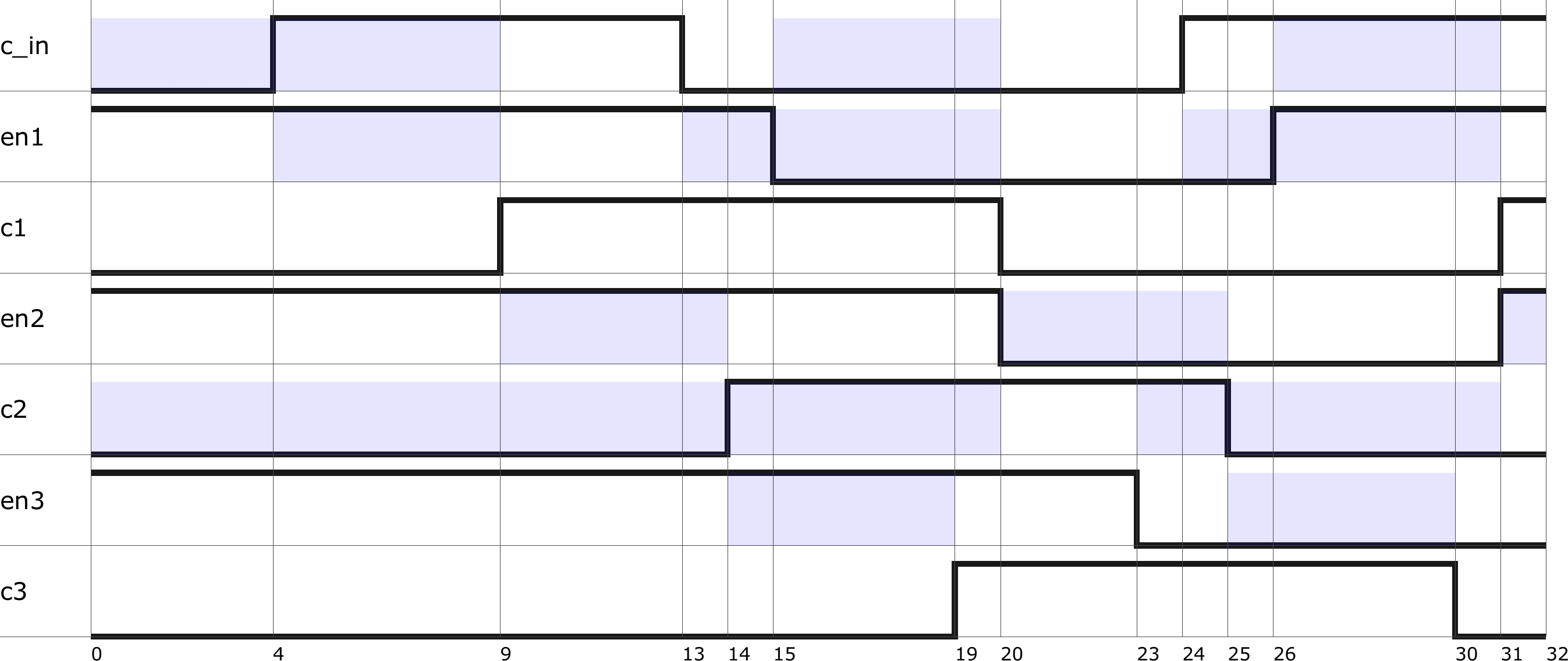}
    \caption{Execution prefix of linear 3-stage pipeline until time $T=32$ with sensitivity windows marked in blue. Monitored signals are {\tt c1} and {\tt c3}.
    Here, $P(\text{fail}) = 0.54375$.}
    \label{fig:trace-check}
\end{figure}

A fault occurring at any point of the blue sensitivity windows will drive one (or more) of the circuit's monitored signals to~$\meta$. A fault hitting any other region (excluding the monitored signals) will be masked and will not reach the monitored signals of the circuits.
Observe that in this example all sensitivity windows are trivial postfixes of regions: a region is either fully non-susceptible or susceptible; in general this does not necessarily hold.

\subsection{Comparison of fault-tolerance depending on speed}\label{sec:linear}

We next illustrate the use of our tool for a linear pipeline, where we vary the source and sink latencies.
Inverter delays are symmetric and normalized to~1 and Muller C-element latencies are set to~5 inverter delays.
The results are shown in Figure~\ref{fig:muller3_sweep_3dplot}, with cuts in Figures~\ref{fig:muller3-sweepSink} and~\ref{fig:muller3-sweepSource}. 
The length of the execution prefix has been chosen sufficiently high, to account for a sufficiently long time for $P(\text{fail})$ to be dominated by the periodic operation of the circuit rather than the initial transient phase: $T=500$ in the overview plot and $T=1000$ in the detailed sweeps.

\begin{figure}[h]
    \centering
    \includegraphics[width=0.3\columnwidth]{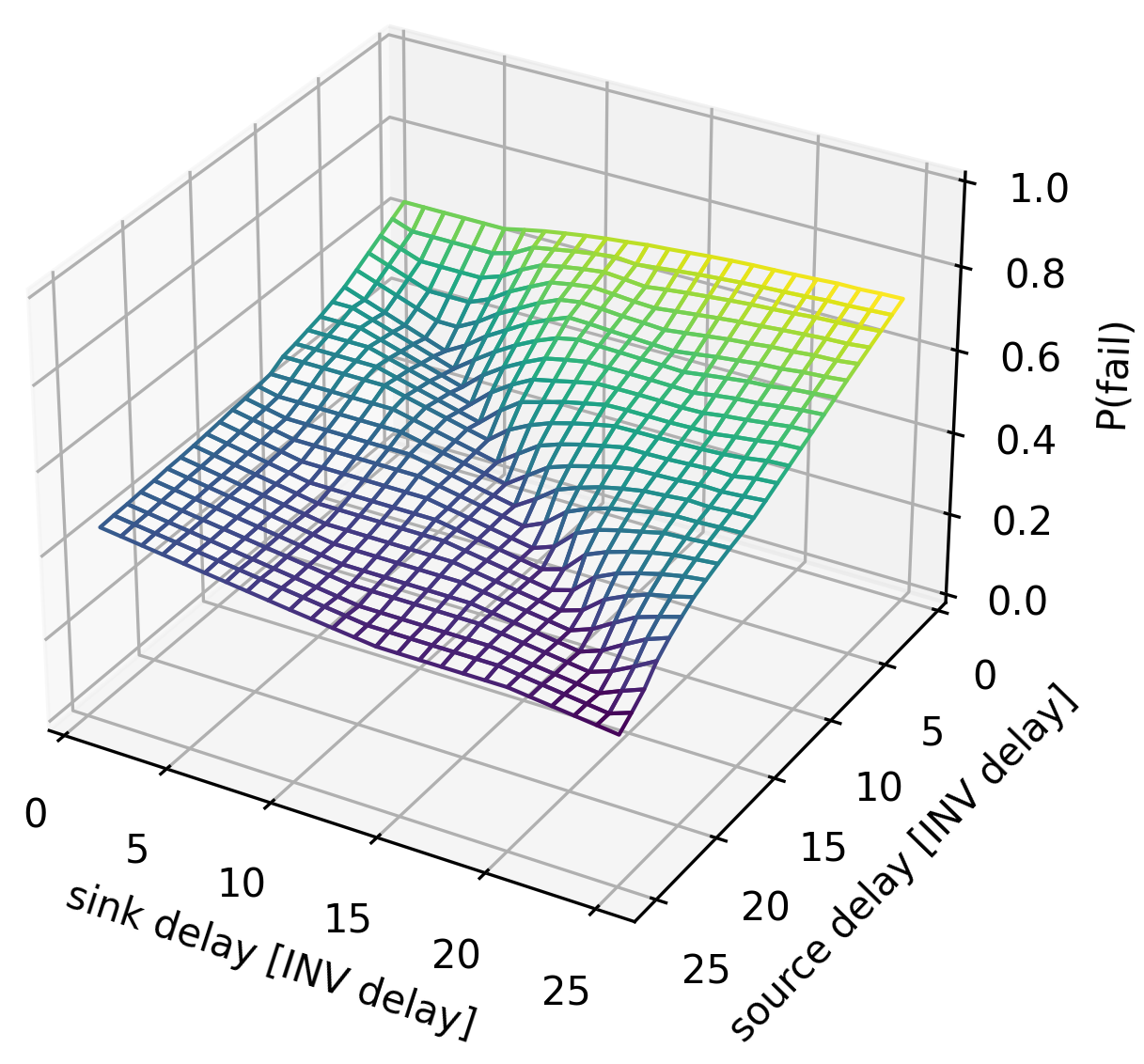}
    \vspace{-0.2cm}\caption{Influence of source and sink speed on $P(\text{fail})$. Linear 3-stage pipeline with delays as follows: 1 ({\tt INV}), 5 ({\tt MCE}), varying source and sink delays. $T = 500$.}
    \label{fig:muller3_sweep_3dplot}
\end{figure}

\begin{figure}
    \centering
    \begin{minipage}{0.45\textwidth}
        \centering
        \includegraphics[width=0.9\textwidth]{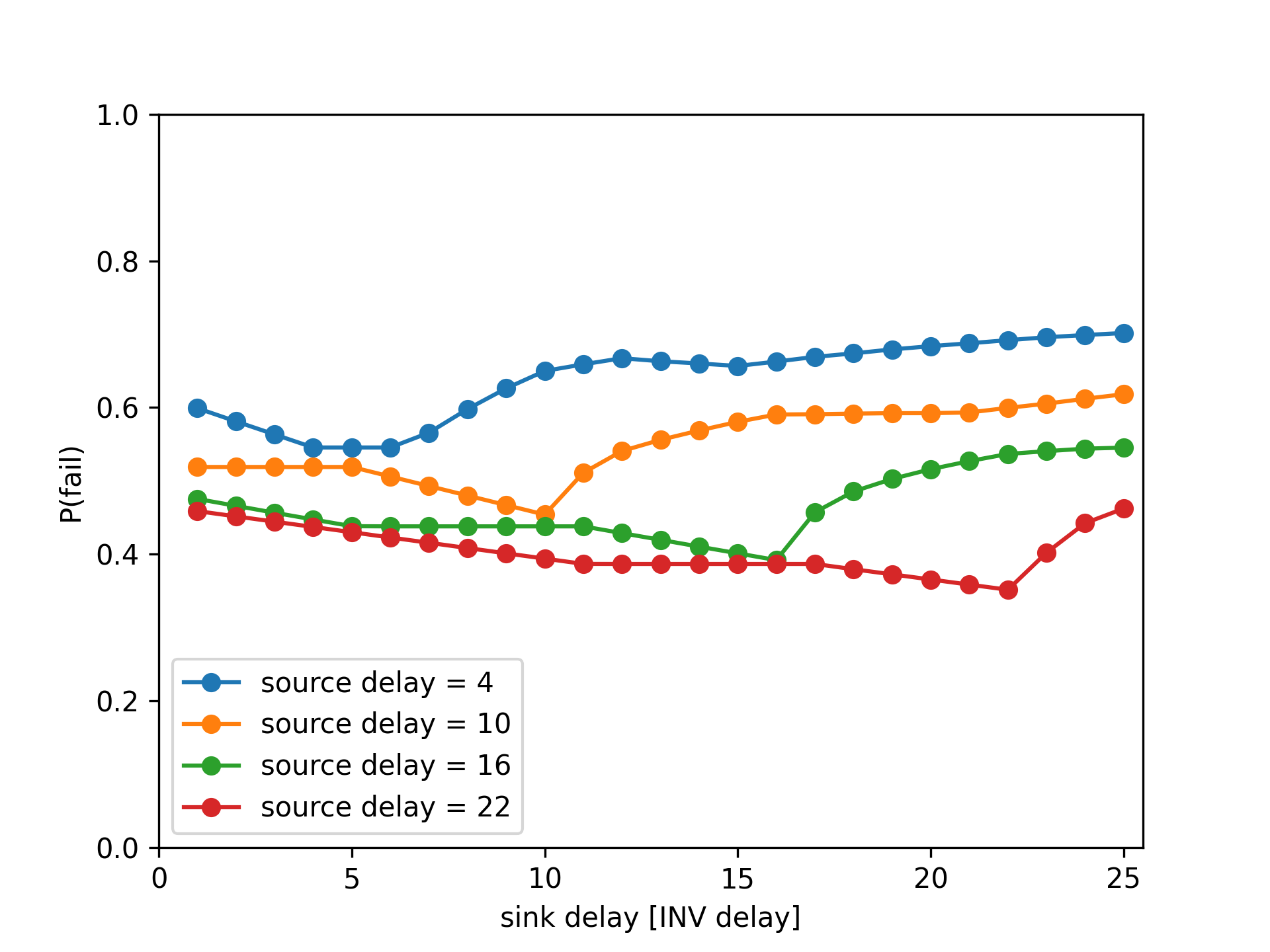}
        \vspace{-0.2cm}\caption{Influence of sink speed on $P(\text{fail})$. Linear 3-stage pipeline with delays: 1 ({\tt INV}), 5 ({\tt MCE}), 4 different source delays, varying sink delay. $T=1000$.}
        \label{fig:muller3-sweepSink}
    \end{minipage}\hfill
    \begin{minipage}{0.45\textwidth}
        \centering
        \includegraphics[width=0.9\textwidth]{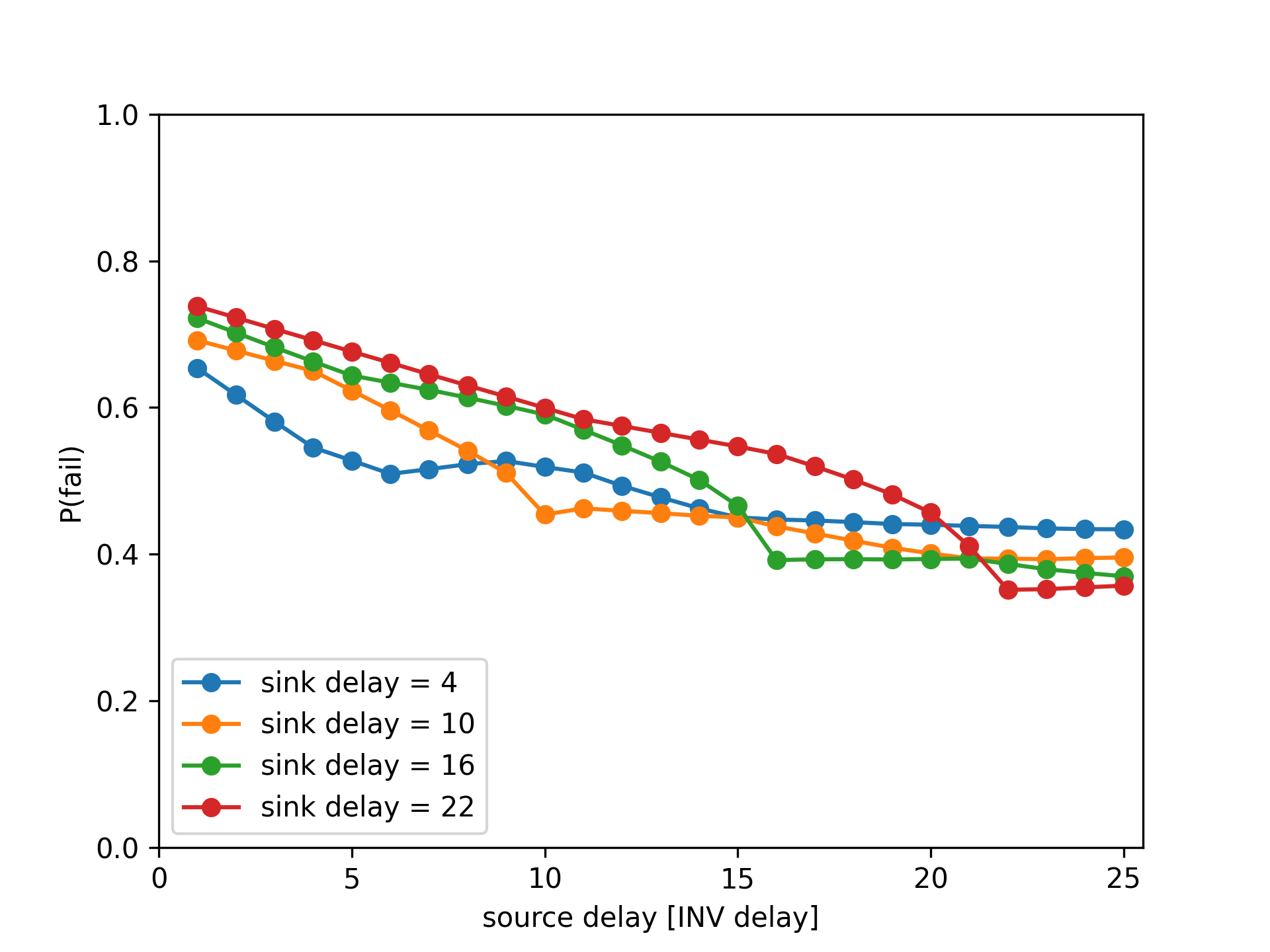} 
        \vspace{-0.2cm}\caption{Influence of source speed on $P(\text{fail})$. Linear 3-stage pipeline with delays: 1 ({\tt INV}), 5 ({\tt MCE}), 4 different sink delays, varying source delay. $T=1000$.}
        \label{fig:muller3-sweepSource}
    \end{minipage}
\end{figure}

Figure~\ref{fig:muller3_sweep_3dplot} shows an overview of the behavior of the circuit under a stable environment, be it fast or slow, and how the circuit reacts when there is an unbalance between the speeds of source and sink. The $z$-axis displays the probability of an~$\meta$ value presence at any of the monitored signals. The $x$ and $y$-axes represent the speeds (latencies) of sink and source, respectively, in time units. The pattern of the plot is best visualized when having the latter axis inverted.
The diagonal of the frame where both sink and source latencies are equal to 1 (fast) to where they are both 25 (slow) represents the stable/balanced environment, i.e., the source provides \emph{tokens} with the same speed as the sink provides the \emph{acknowledgment}. The figure indicates that $P(\text{fail})$ is high when the environment is stable and fast, and decreases as it gets stable and slow. When the environment is balanced, the MCEs in the circuit are not waiting for either the \emph{data} or the \emph{ack} signals; both are supplied within short intervals of time from each other. Since the waiting phases are those where the MCE operates in the vulnerable storage mode (inputs mismatching), one observes that reducing the waiting period decreases $P(\text{fail})$.

The environment imbalance is divided further into 2 modes of operation. On the right side (for relatively low source delay) of the figure, the circuit is operating in \emph{bubble-limited} mode, where the sink's response to the source's new tokens is slow. On the left half of the figure, the sink's activity is faster than the source's, driving it in \emph{token-limited} mode.

 The vulnerability of the bubble-limited mode can be seen more clearly in Figure~\ref{fig:muller3-sweepSource}; this is where the system is most prone to failure. The probability $P(\text{fail})$ varies from around 60-80\%, where it reaches the maximum when the sink delay is equal to 22 while source delay is 1 (maximum imbalance).
Similarly, the token-limited mode falls near the sink latency of~1 in Figure~\ref{fig:muller3-sweepSink}, varying from around 40-60\%.
The latter figures show several cross-sections of the 3D-plot from Figure~\ref{fig:muller3_sweep_3dplot}. In addition to mapping the token-limited and the bubble-limited areas to these 2 graphs, we can also spot the points belonging to the \emph{balanced environment} diagonal in the frame in Figure~\ref{fig:muller3_sweep_3dplot}.
These points are where the abrupt changes of behavior of each line occur, and consequently we can pinpoint where one region of the mode of operation ends and the other starts.

Finally, Figure \ref{fig:p_per_sig} shows the fault probabilities per signal of the
  linear pipeline as reported by our tool for varying source and sink delays (fast, normal, and slow). It allows us to give a more detailed interpretation of our observations in Figure~\ref{fig:muller3-sweepSink}. 
The probabilities for the monitored signals {\tt c1} and {\tt c3} are always~$1.0$ as
  by definition of the fault probabilities.

\begin{figure}[t]
    \centering
    \includegraphics[width=0.25\columnwidth]{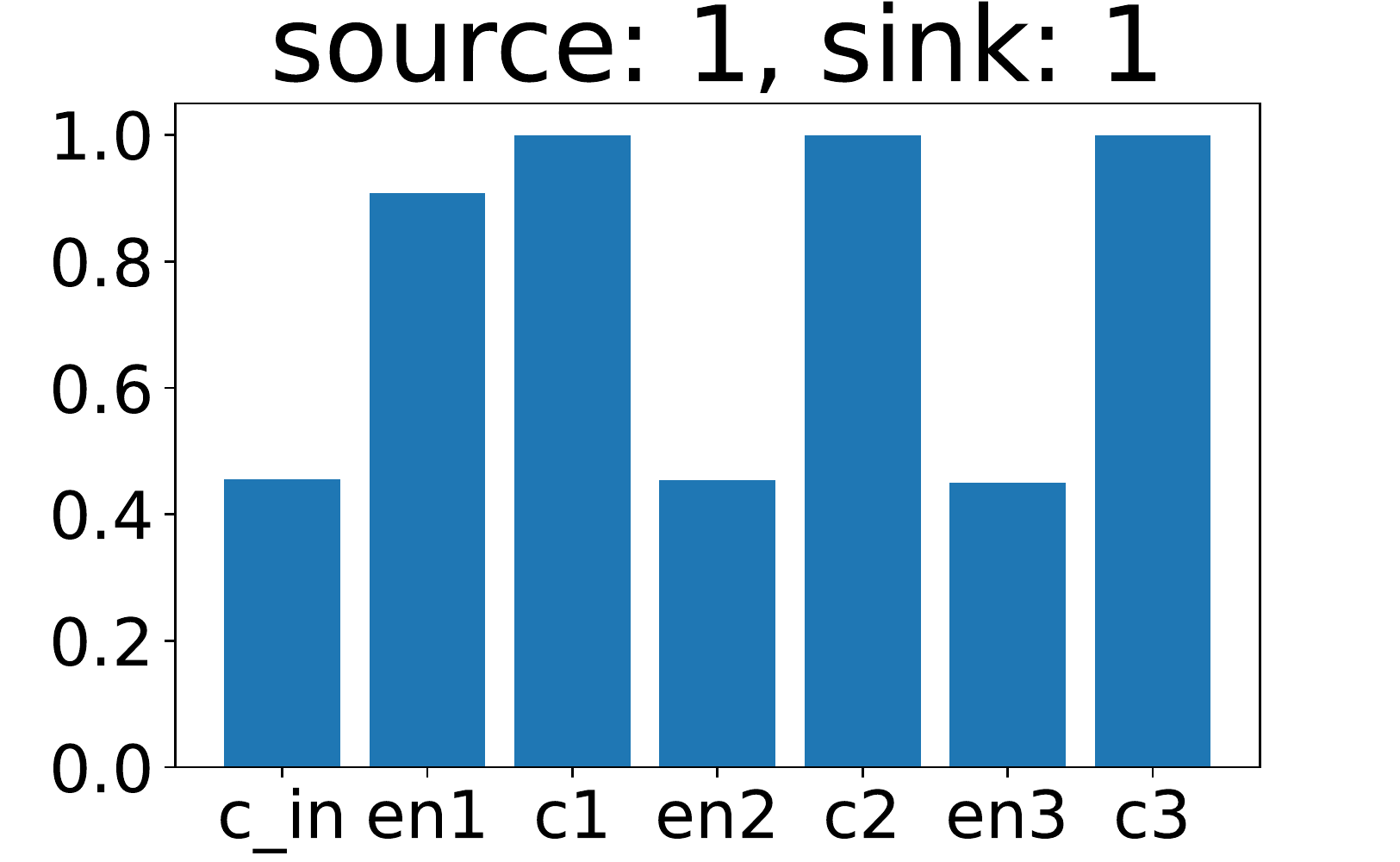}
    \includegraphics[width=0.25\columnwidth]{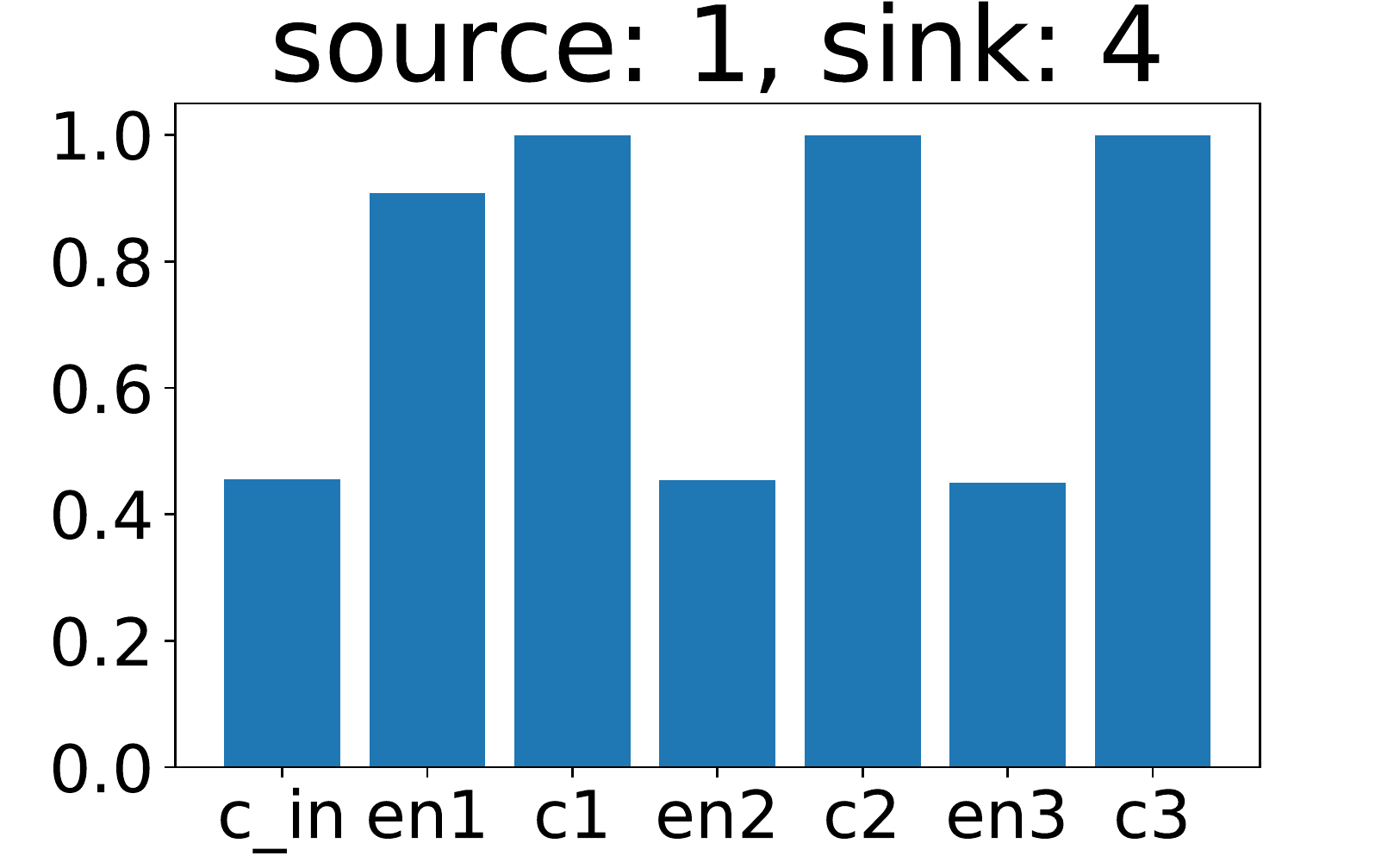}
    \includegraphics[width=0.25\columnwidth]{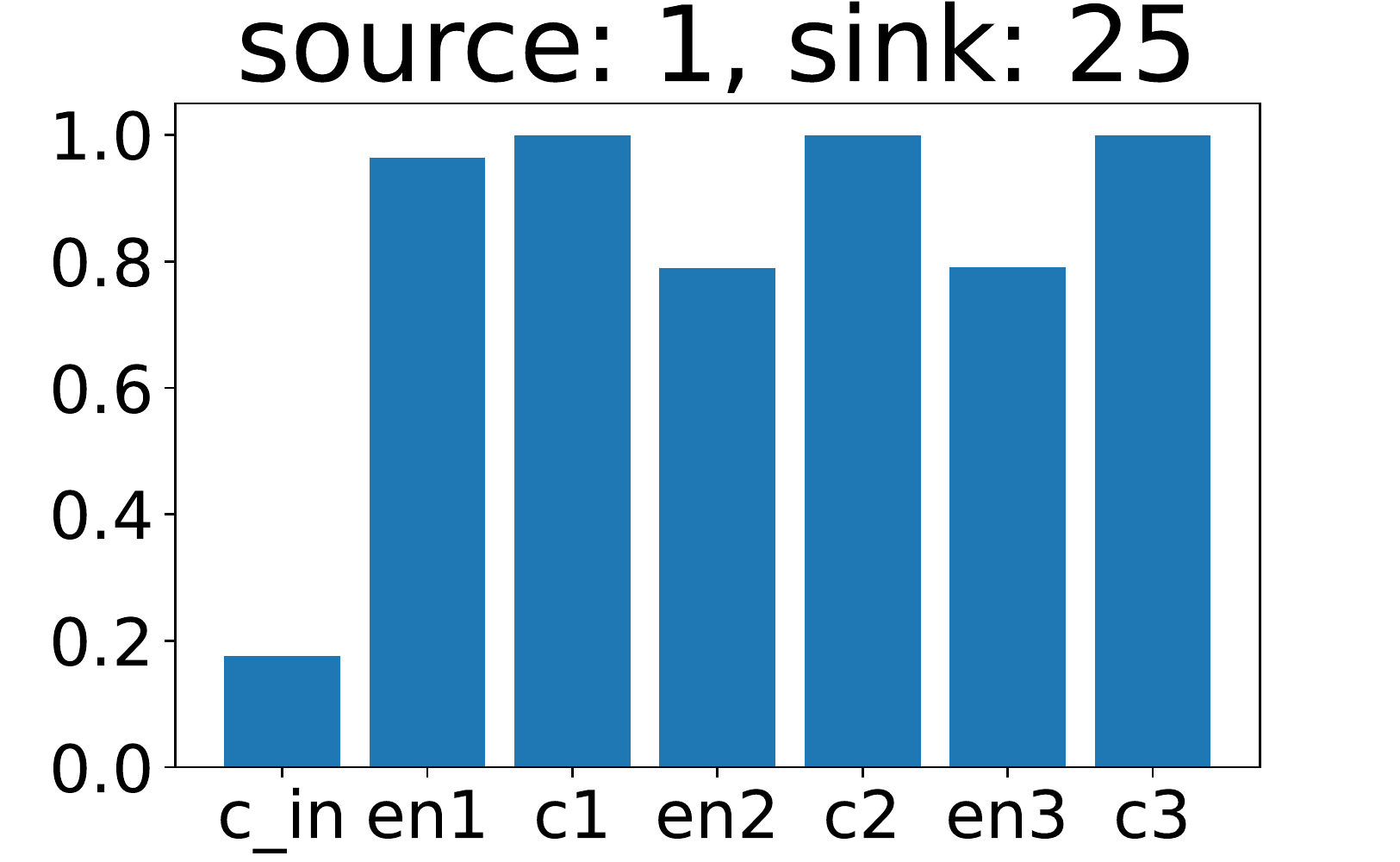}\\
    \vspace{0.2cm}
    \includegraphics[width=0.25\columnwidth]{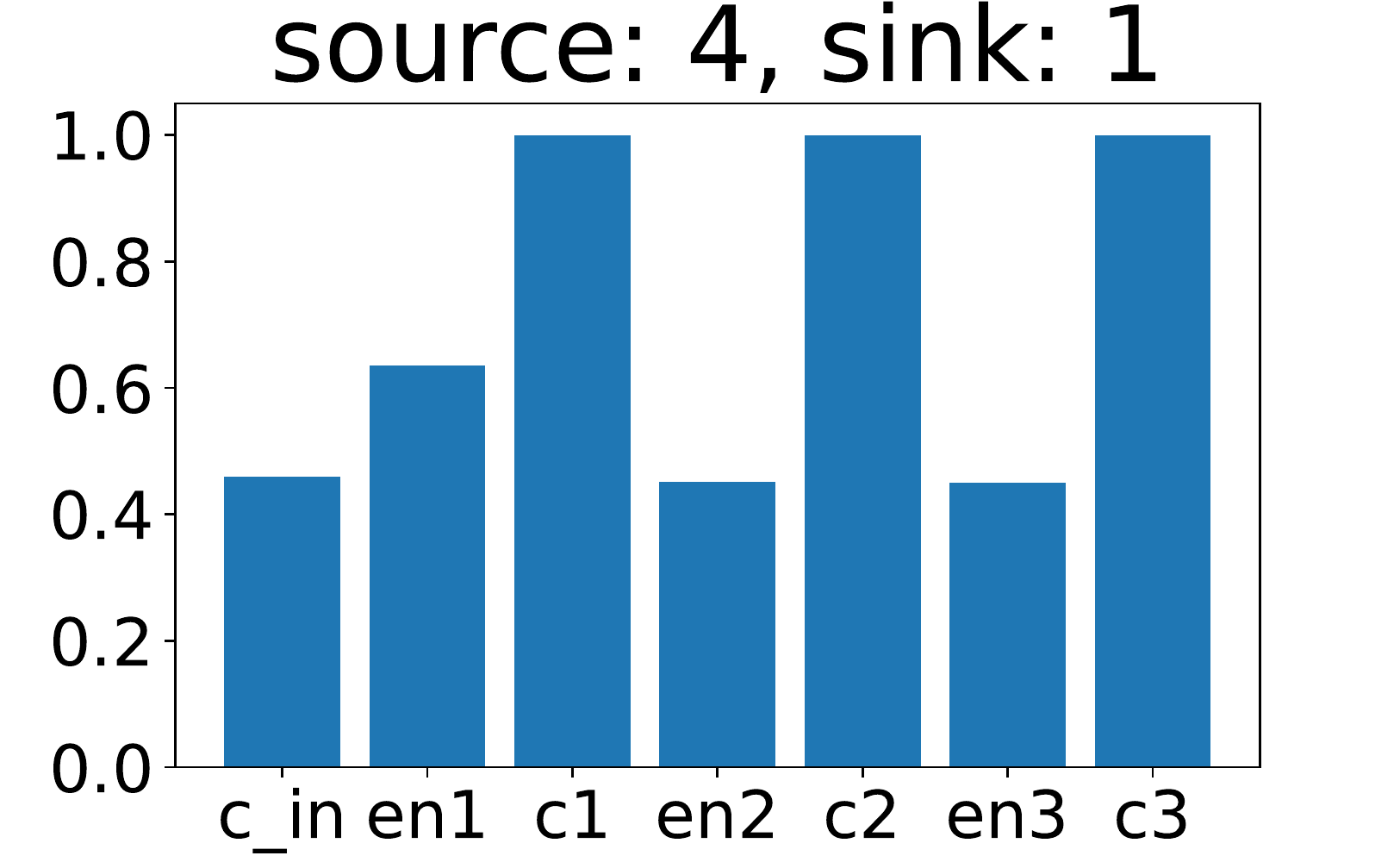}
    \includegraphics[width=0.25\columnwidth]{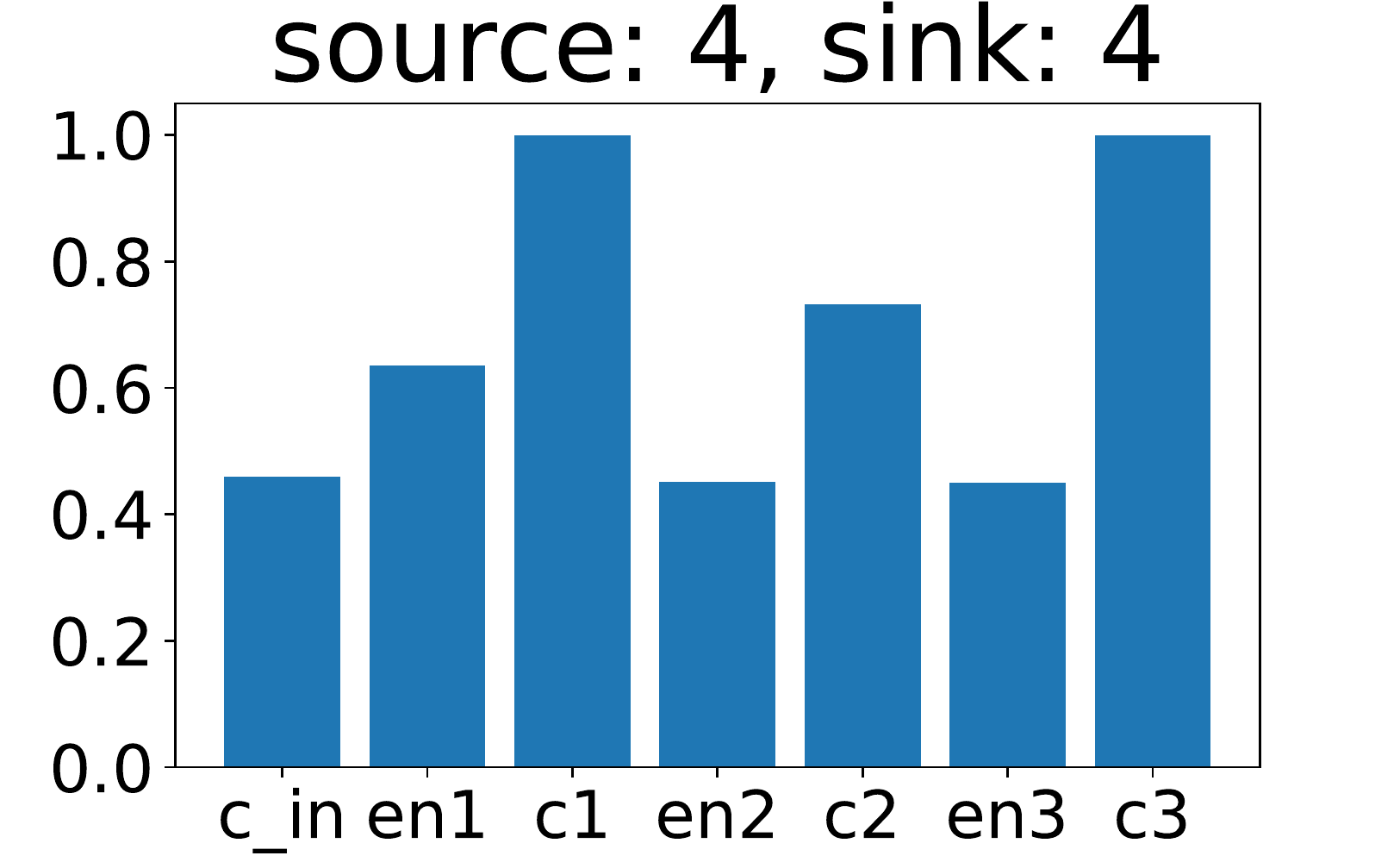}
    \includegraphics[width=0.25\columnwidth]{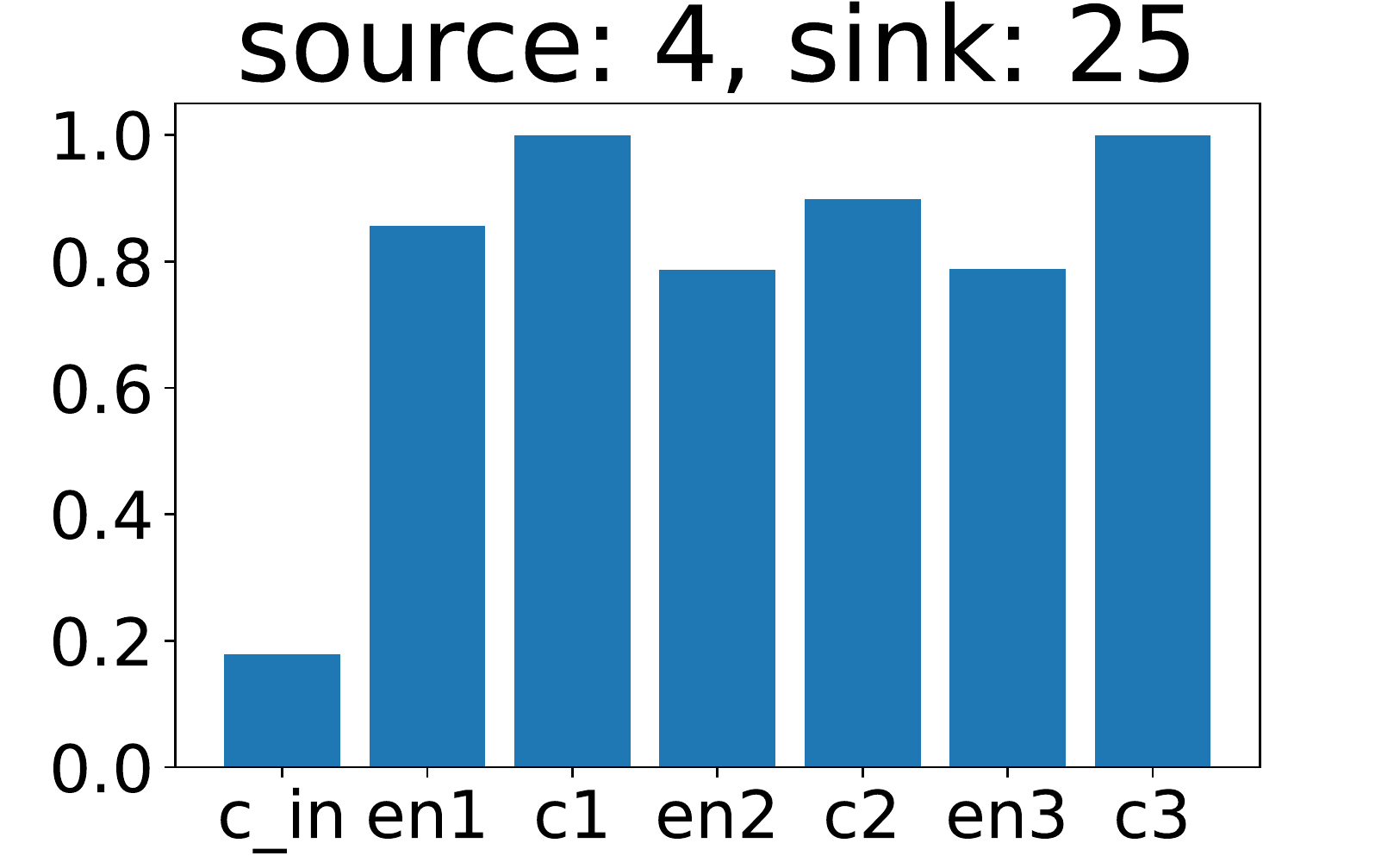}\\
    \vspace{0.2cm}
    \includegraphics[width=0.25\columnwidth]{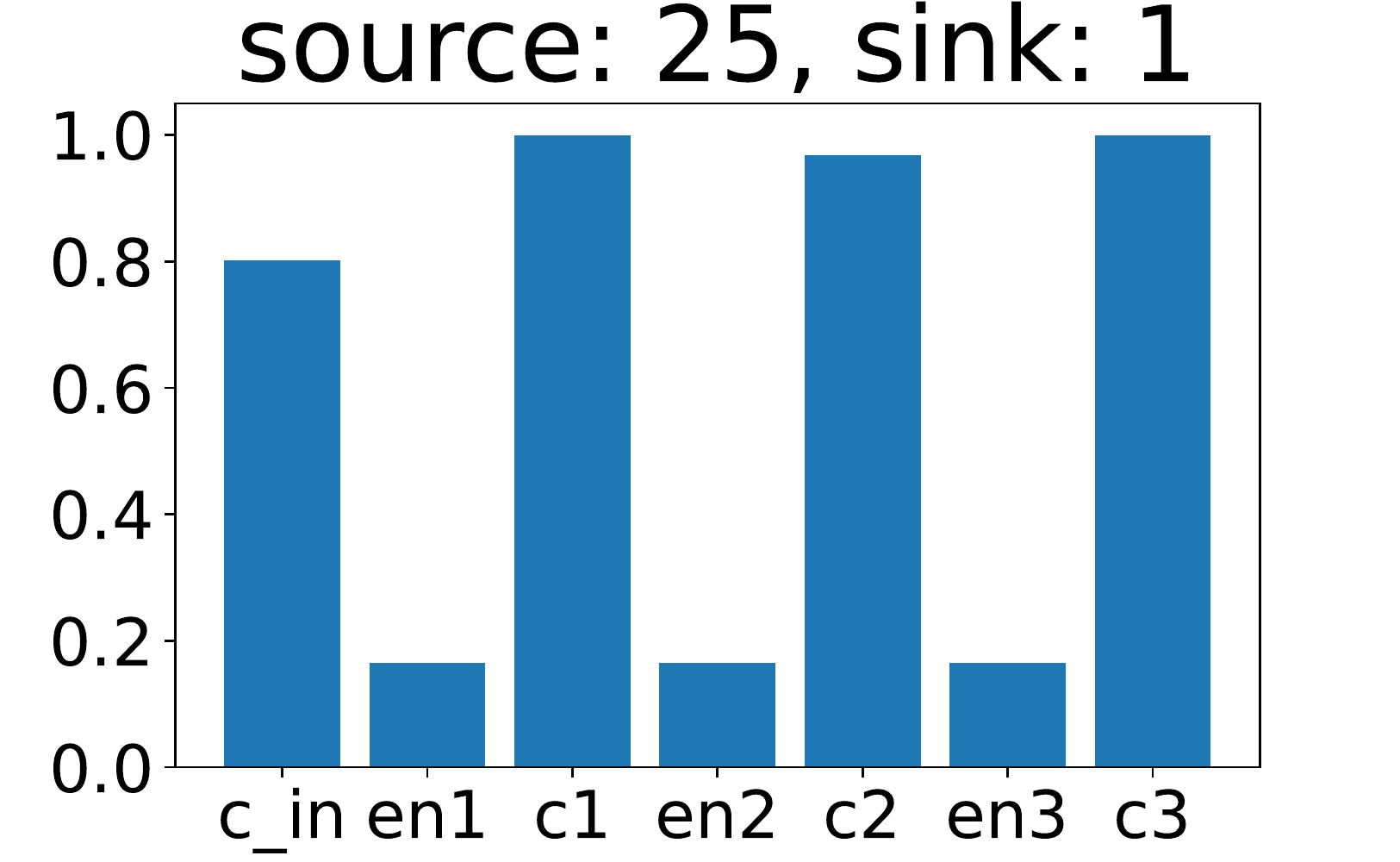}
    \includegraphics[width=0.25\columnwidth]{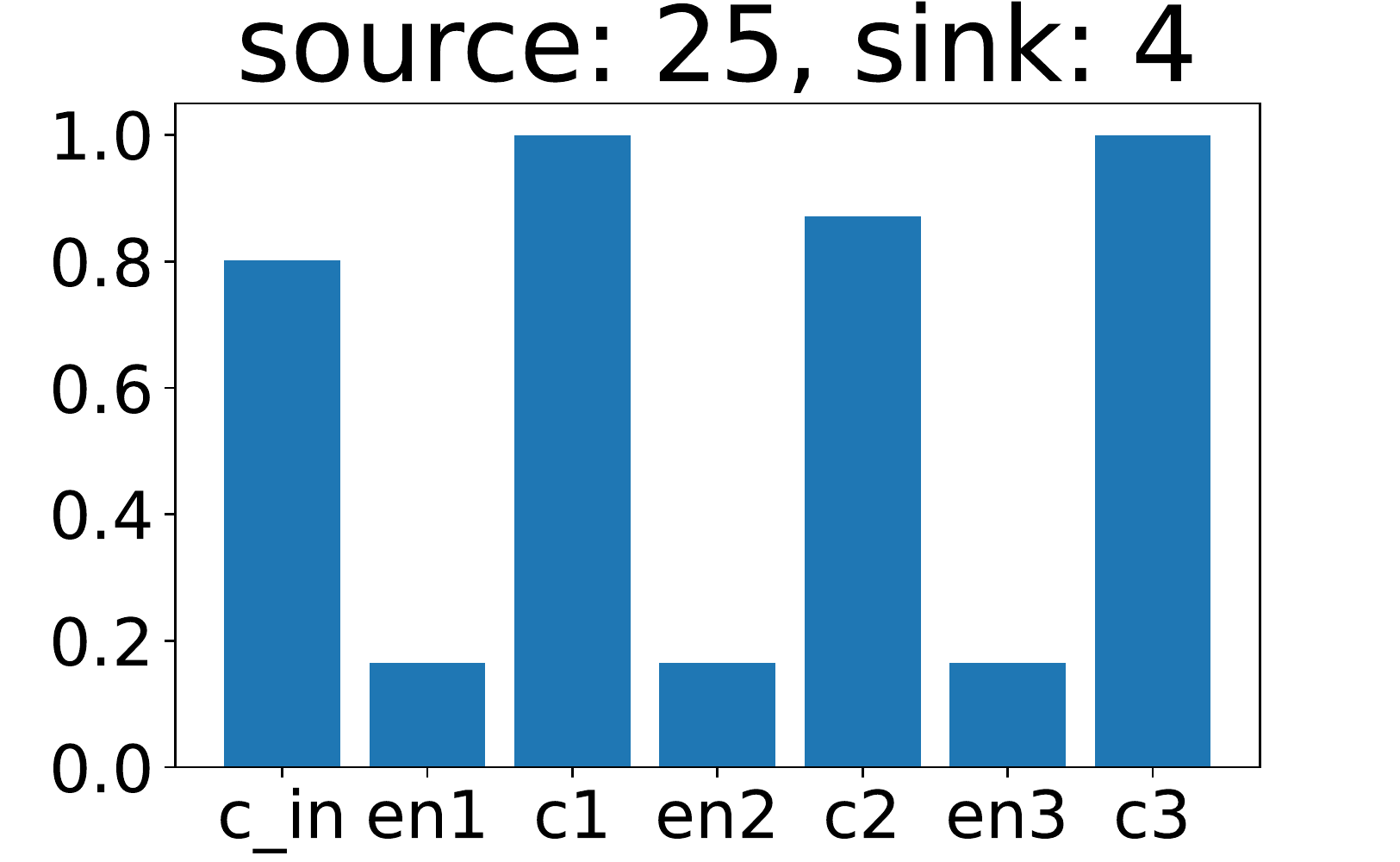}
    \includegraphics[width=0.25\columnwidth]{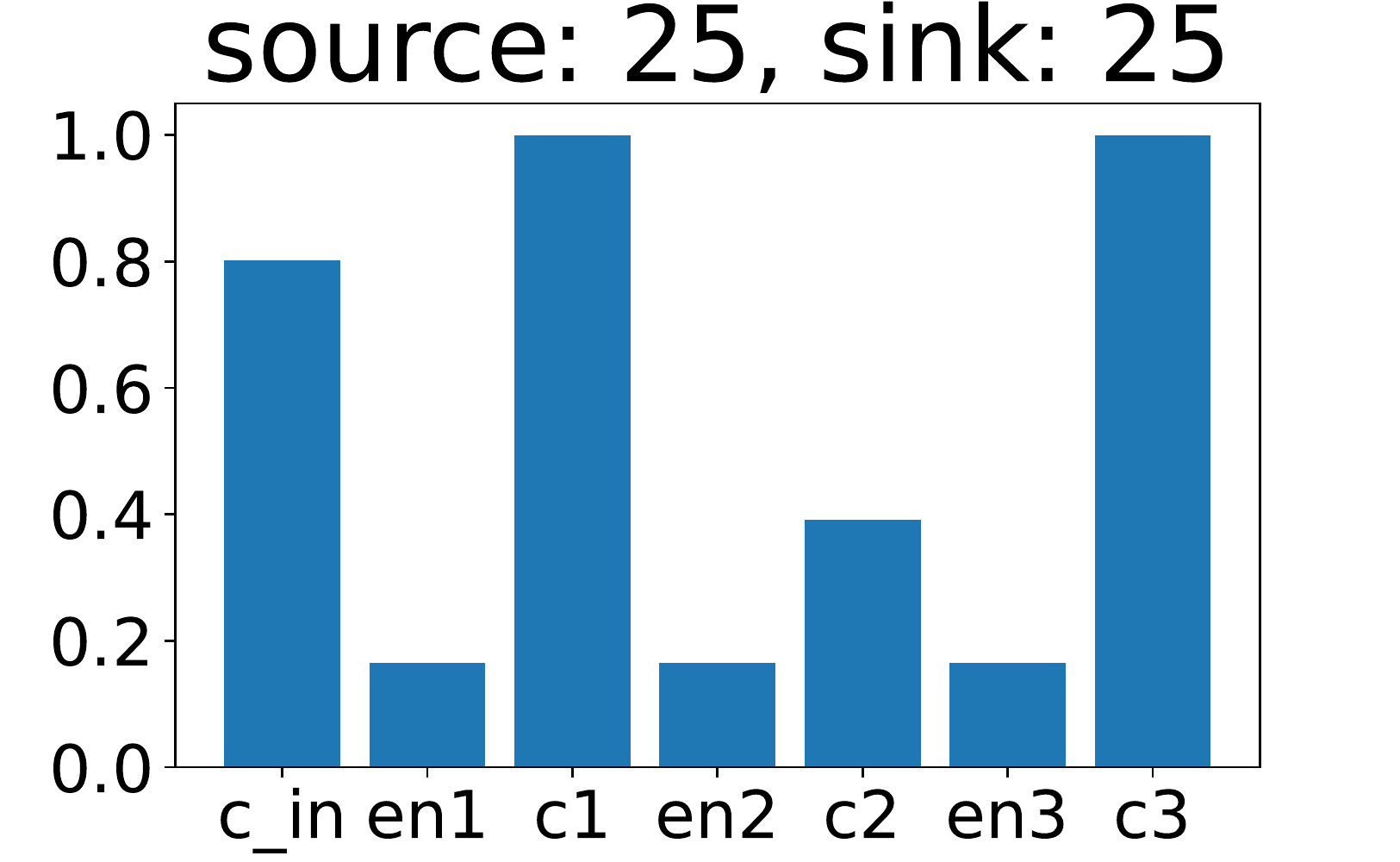}
    \vspace{-0.1cm}
    \caption{Circuit fault probability per signal of the linear 3-stage pipeline with varying source and sink delays. Delays: 1 ({\tt INV}), 5 ({\tt MCE}), and 3 different source and sink delays indicated in the figure. $T=1000$.}\vspace{-0.5cm}
    \label{fig:p_per_sig}
\end{figure}

Interestingly, {\tt c2} has a high fault probability, too. For fast sink this can be explained as follows: {\tt MCE3} spends most of the time in the vulnerable storage mode, waiting for a transition on {\tt c2}. As soon as one occurs, it triggers a transition on {\tt c3} which, after the short sink delay, puts {\tt MCE3} back to storage mode. This only leaves a very short time window where {\tt MCE3} is masking faults at {\tt c2}. The enable signals, in turn, are only vulnerable during those short windows, thus showing low fault probability, especially when the source delay is high, see subplot ``source:25, sink:1''.

Recall that we have chosen a relatively large switching delay for the MCE, and our pessimistic model assumes the MCE to be vulnerable during the whole switching duration. This explains why in general $P(\text{fail})$ increases for faster operation speed: the proportion of the sensitive MCE switching phase increases. This can be most directly observed for the balanced cases.

For the other imbalanced extreme with ``source:1, sink:25'' we observe high fault probability for the enable signals. This is not surprising, since now the MCEs spend most of the time waiting for transitions on these signals. A fault probability of $1.0$ for {\tt c1} and {\tt c3} is also unsurprising, due to our definitions, as mentioned already. Quite unexpected, on the first glance, is the fact that {\tt c2} again shows high fault probability, even though we can assume good masking by {\tt MCE3} for that input. The reason here is that, via {\tt INV2}, faults from {\tt c2} directly propagate to {\tt en1} which is known to have low protection by masking. As a result, we see a generally high fault probability in this mode.

For additional simulations and larger circuits, we refer the reader to the appendix: a cyclic version of the Muller pipeline is discussed in Section~\ref{sec:ring} and higher-bit-width modules are discussed in Section~\ref{sec:multi}.

\section{Conclusion}
\label{sec:conclusion}

We have shown that the regular operation of circuits can be decomposed into time windows within which faults are equivalent in that their effect (as perceived at some selected monitoring signals) remains the same. These time windows are bounded by an arbitrary bound on the left and a regular signal transition on the right. 
Consequently, for determining the effect of transient faults on a circuit, a single bisection between each pair of neighboring signal transitions is sufficient to determine all sensitivity windows.

The approach has two advantages over standard sweeping approaches to find sensitive regions: (i) it provably finds all sensitivity windows, no matter how small they are. Sweeping by contrast always leaves the possibility open that a small window may exist between two samples.
(ii) It outperforms sweeping in that a small grid of samples is not necessary: many (large) windows require only a single sample via our method, and at most a bisection.

Based on this result we have developed a Python-based tool that, starting from a production-rule based circuit description, systematically explores its resilient and its vulnerable windows (along with the respective fault effects). 
The relative size of the windows is then used to predict the proportion of (random) faults that will be effective, and thus, given a fault rate, the failure rate.
Since our approach allows identifying the windows individually, it is possible to attach weights to the diverse nodes to account for different susceptibility (drive strength, e.g.) in the overall prediction.
We have illustrated the function of our tool on a typical QDI circuit example which showed that the tool is efficient and allows for fast analysis.

While we focused on a proof-of-concept with smaller circuits, a next step is to run our method on larger circuits.
Another extension of our approach is to determine the constituent parameters for the window sizes. Since we determine all windows individually in our automated process, backtracking to the origins of the relevant signal transitions is possible. With that information we can determine in detail how individual parameters like circuit delays or pipeline load influence resilience and hence elaborate targeted optimizations.
Finally, work on improving the performance of the implementation is planned: the proposed algorithm is easily parallelizable since windows can be determined independently and hence concurrently.


\bibliographystyle{IEEEtran}
\bibliography{main}

\clearpage
\appendix
\section{Appendix}

Section~\ref{sec:background} gives a brief overview on the main concepts of asynchronous circuits used in this work.
Sections~\ref{sec:ring} and~\ref{sec:multi} demonstrate the applicability of our approach to circuits different than the linear pipeline.

\subsection{Asynchronous circuits: overview on concepts used in this work}\label{sec:background}

\paragraph{QDI circuit}
Asynchronous circuits, unlike their synchronous counterparts, are not governed by a rigid time grid that centrally determines the communication between any two entities. Alternatively, data transfer follows a closed-loop control provided by a local handshake process that specifies when new data is available for sending and recognizes when old data has been processed. \emph{Delay-insensitive (DI)} circuits offer the ultimate timing flexibility by automatically adapting to gate and wire delays. Very few circuits, however, can actually be designed following the DI delay model, since such circuits can provably be constructed only from inverters (1 input) and Muller C-elements (MCEs, 2 inputs) when restricted to single-output gates \cite{martin1990limitations,Manohar2017theorem}. By adding the \emph{isochronic fork} constraint, one obtains the class of \emph{quasi delay-insensitive (QDI)} circuits, which is only lightly-constrained with respect to timing and provides sufficient expressivity. This timing assumption requires the delays of the individual paths of an isochronic fork to be about equal, assuring that a signal arrives at all ends of the fork at about the same time~\cite{beerel2010designer,martin1986compiling}.

\paragraph{Muller C-element}
The MCE is a fundamental building block in QDI circuits, as it can be seen as the simplest form of storage element, and it is crucially used in the so called \emph{completion detection (CD)} units. It can also serve as a control unit in some QDI buffer templates. The MCE can be considered as an AND gate with hysteresis. In case of matching inputs, it sets the output to this corresponding value; when inputs don't match, the output retains its previous logic level.

\paragraph{QDI pipeline}
In a QDI pipeline, data is issued by a source and travels through sequential stages of latches, which are sometimes separated by logic function units, until it reaches the sink (in absence of logic functions, the pipeline acts as a data queue). Figure~\ref{fig:stage} shows the components of one such stage. When new data is detected at the input side, the buffer captures it only when an acknowledgement from the next stage is received, indicating that old data has been stored. The CD then signals the receipt of this new data item by issuing an acknowledgment signal to its previous stage, closing the control loop.

Since there is no explicit request signal in the QDI design style, data must be encoded using a so-called DI code. The encoding depends on the communication protocol in use. The \emph{return-to-zero (RTZ)}, or \emph{4-phase}, protocol is commonly used along with the \emph{dual-rail} encoding scheme, where one data bit is encoded on two wires, namely the \emph{true} and \emph{false} rails: Any two data items are separated by a \emph{spacer}. This spacer is encoded by logic ‘0’ on all rails and carries no information. Only one rail for each bit can be set to logic ‘1’ at a time; having both rails set to ‘1’ violates the protocol and is considered \emph{illegal}. The CD indicates validity and completeness of a data item or a spacer, issuing the appropriate \emph{ack} to the other stage. The CD is made up of a simple OR gate in the case of a single bit, and employs additional logic for higher bit-widths, as shown in Figure~\ref{fig:CD}.

\begin{figure}
    \centering
    \begin{minipage}{0.45\textwidth}
        \centering
        \includegraphics[width=\textwidth]{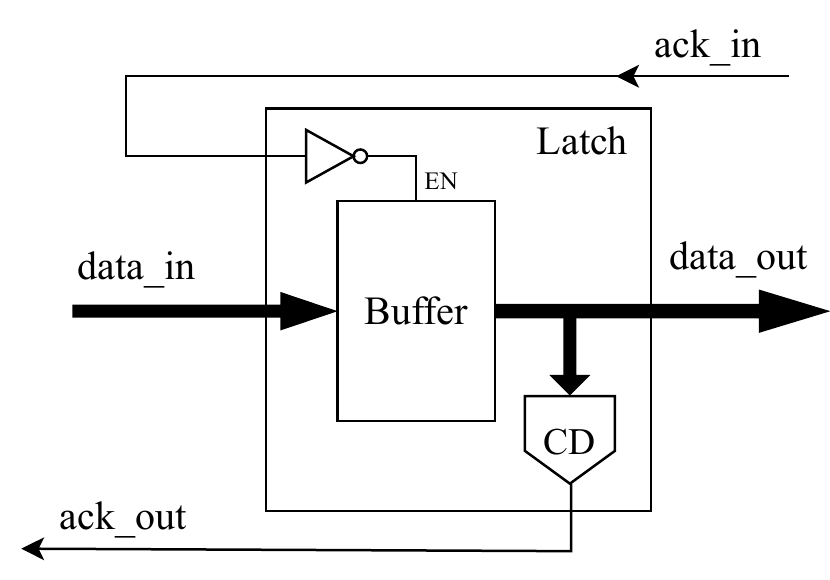}
        \caption{QDI pipeline stage with latch components}
        \label{fig:stage}
    \end{minipage}\hfill
    \begin{minipage}{0.45\textwidth}
        \centering
        \includegraphics[width=\textwidth]{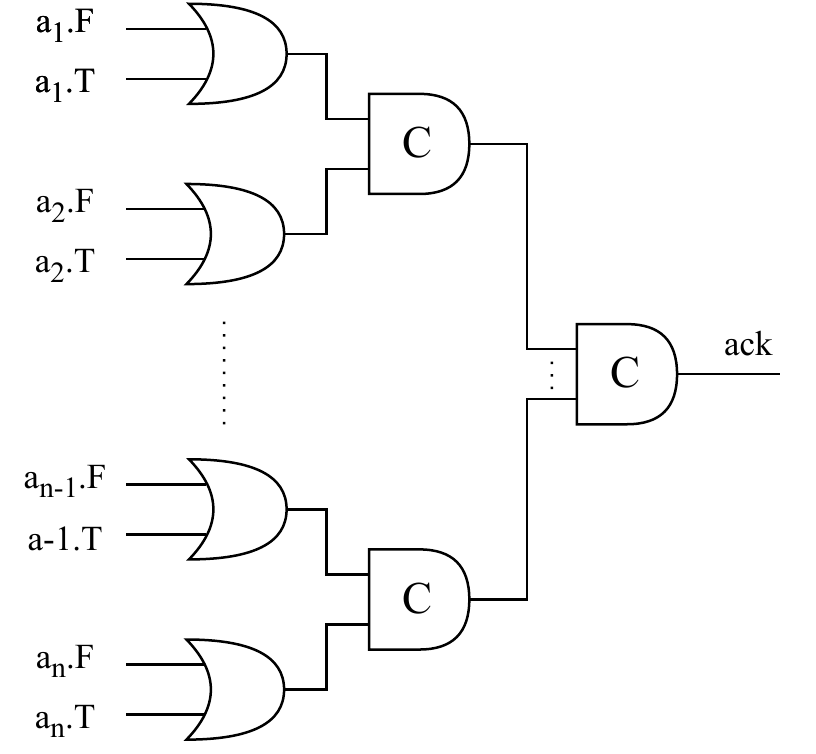} 
        \caption{The internal structure of a multi-bit CD using OR gates and MCEs}
        \label{fig:CD}
    \end{minipage}
\end{figure}

\subsection{Comparison of fault-tolerance for Muller pipeline rings}\label{sec:ring}

Another common asynchronous pipeline
construct is rings. We interpret the pipeline operation to implement a 4-phase QDI protocol in the following. Shown in Figure~\ref{fig:muller3_ring}, is a 3-stage Muller pipeline where one \emph{data token}, one \emph{spacer}, and one \emph{bubble} keep rotating. Note that when using the term \emph{token} on its own, it encompasses a data token along with a spacer, so 1 token means one of each.
It is possible to also interpret a token to follow the 2-phase communication protocol, and in this case we would double this count.

\begin{figure}[h]
    \centering
    \includegraphics[width=0.6\columnwidth]{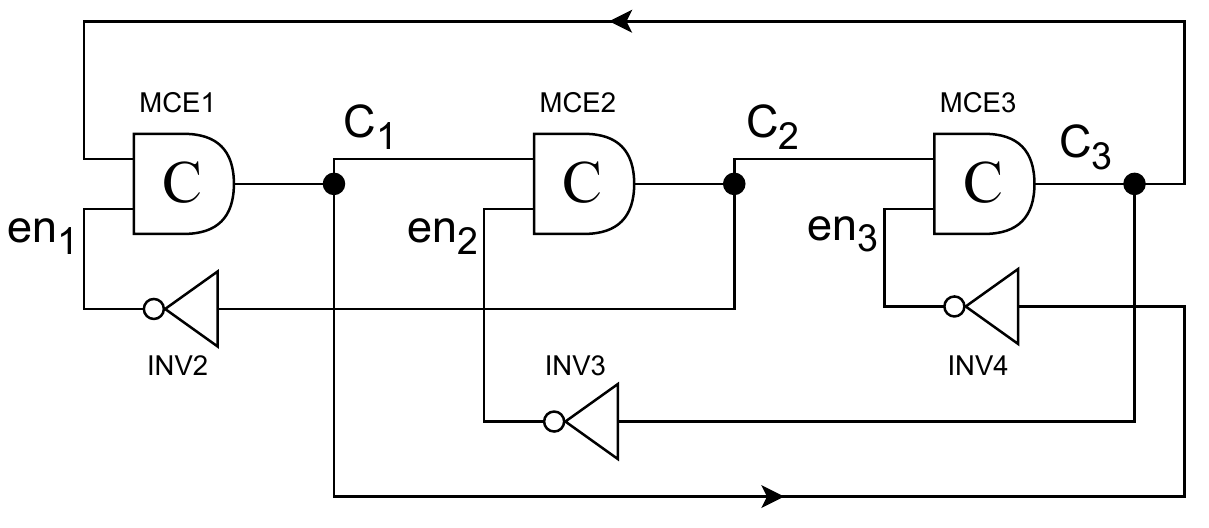}
    \caption{Ring 3-stage pipeline. The delays are set to 1 ({\tt INV}) and 5 ({\tt MCE}).}
    \label{fig:muller3_ring}
\end{figure}

In order to study the resilience of a Muller ring w.r.t. the activity inside the pipeline, 
we need to vary its \emph{occupancy}, i.e., the number of data items revolving in it \cite{gill2008performance}. We need at least one bubble in the ring, regardless of how many stages constitute it. When the number of data items in the ring is small, the other stages of the pipeline will be filled with \emph{holes}. When there is more holes than data, the pipeline is said to be \emph{data-limited}; when there is more data than holes, it is said to be \emph{hole-limited} \cite{gill2008performance}. These operation modes correspond to the token-limited and bubble-limited modes, respectively, in the linear Muller pipeline.

We use our tool to study the effect of varying the ring occupancy, by building the ring with a different number of stages and changing the token count. As this is a Muller pipeline, each C-element needs alternating input sequences to be able to transition from $0$ to $1$ every stage.

In order to keep the pipeline running and avoid deadlock, the process of correctly initializing the stages of the ring is crucial. As previously mentioned, there must be at least one bubble in the pipeline. For each combination of tokens and stages, we calculate the number of bubbles needed and we fill the pipeline in the following manner:
\begin{itemize}
    \item If the number of bubbles is much larger than the number of tokens, we start by filling the pipeline with bubbles, and insert tokens equally paced from one another.
    \item If the number of bubbles is much lower than the number of tokens, we start by inserting tokens, and spread the bubbles in between.
    \item If there is only one bubble, it doesn't matter where it is inserted. Same if there is only one token.
    \item A token is always inserted as a data token and a spacer that are not separated by a bubble.
\end{itemize}

The results for these settings are shown in Figure~\ref{fig:muller-ring-sweepTokens}. The first point of each line (from the left) represents the maximum number of tokens allowed for the corresponding number of stages (recall that this count represents, in fact, a data token and a spacer). The top left region represents the bubble-limited operation mode, where one can clearly see that $P(\text{fail})$ gets higher. 
The increasing number of stages also seems to play a role in this trend. From what we have previously observed 
 we can conjecture that this is because an idle (waiting) stage (MCE) has the highest fault probability, while one that processes a token/transition is more resilient. By adding stages while keeping the number of tokens constant, we add idle stages -- consequently $P(\text{fail})$ increases.

As we move to the edges of the token-limited region where the number of bubbles largely exceeds the number of tokens,  $P(\text{fail})$ converges to a steady percentage of approximately 45\%.

\begin{figure}
    \centering
    \includegraphics[width=0.7\columnwidth]{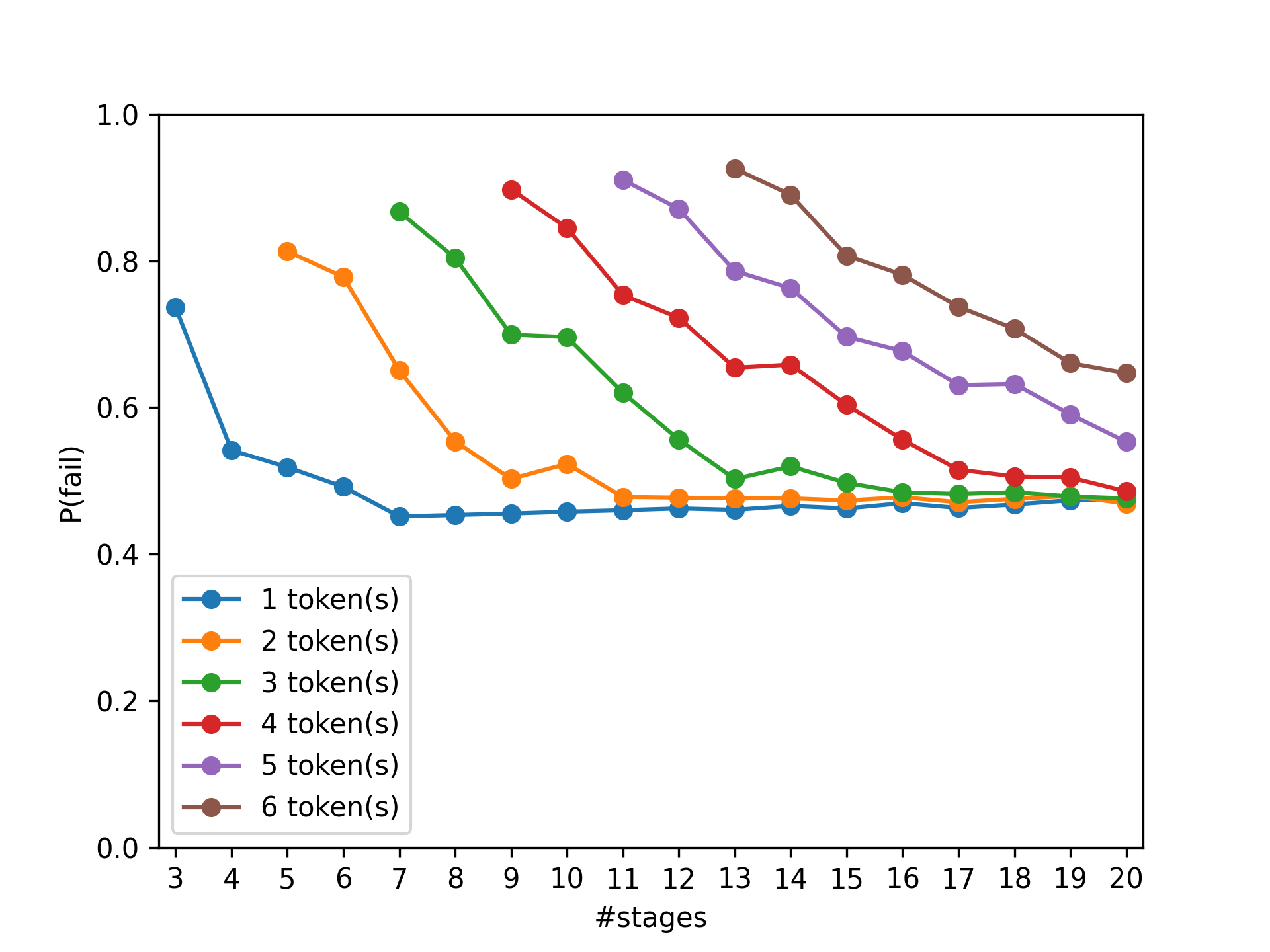}
    \caption{Influence of number of tokens and stages on the probability to fail $P(\text{fail})$. Ring pipeline with varying number of stages and tokens. Delays as follows: 1 (INV), 5 (MCE). $T=500$. A token encompasses a data token along with a spacer, so 1 token means one of each, following the 4-phase communication protocol. In the case of a 2-phase protocol, this token count is doubled.}
    \label{fig:muller-ring-sweepTokens}
\end{figure}

Finally, we compare throughput and probability to fail as a function of the same ring pipeline with a varying number of (4-phase) tokens.
It has been previously observed \cite{williams1987self,gill2008performance} that the throughput as a function of tokens behaves as a canopy plot: it is low for small numbers of tokens (token-limited), high in the middle, and low for large numbers of tokens (bubble-limited).
Figure \ref{fig:canopy} compares this behavior with the failure probability as determined by our tool for execution prefixes of length $T = 200$.
While the canopy diagram suggests that 4 and 5 tokens yield optimum throughput, the failure probability favors a lower token count. So for maximum performance the better choice would be 4 tokens.
This result is not general and seems to depend on the design choices, but the general strategy should be to also consider $P(\text{fail})$ in the system design.

\begin{figure}
    \centering
    \includegraphics[width=0.7\columnwidth]{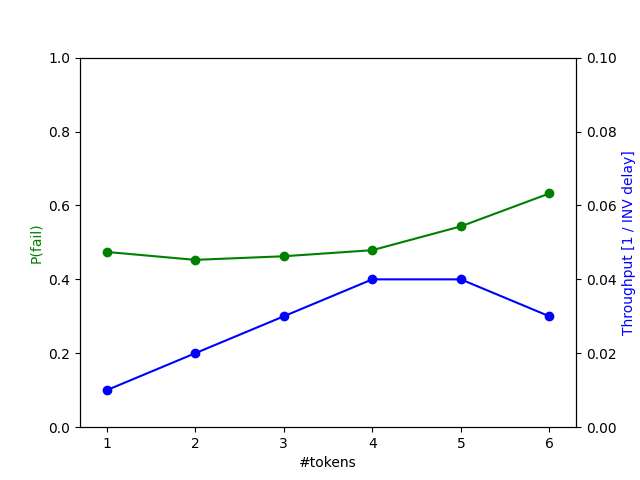}
    \caption{Influence of number of tokens on the probability to fail $P(\text{fail})$ and the throughput (in 4-phase tokens per {\tt INV} delay passing {\tt MCE1}). Ring pipeline with 20 stages and varying number of tokens. Delays as follows: 1 (INV), 5 (MCE). $T=200$. A token encompasses a data token along with a spacer (4-phase; for 2-phase interpretation multiply by factor 2).}
    \label{fig:canopy}
\end{figure}

\subsection{Multi-bit QDI designs}\label{sec:multi}

To demonstrate the ability of our tool to handle larger, multi-bit designs,
  we ran the algorithm to determine susceptible windows on execution prefixes
  of duration $T=500$.
As circuits under test we used 4-bit and 8-bit versions of the previously analyzed
  linear and ring pipeline.
Our tool reported the following values for $P(\text{fail})$:
(i) the 3-stage linear pipeline with 4-bit resulted in $0.22$ and with 8-bit in $0.10$.
(ii) the 3-stage ring pipeline with 4-bit resulted in $0.22$ and with 8-bit in $0.16$.
The observed decrease of $P(\text{fail})$ for higher pipeline width is as expected from literature: while the last rail to switch is the most critical one (it triggers the completion detector), the faster bits are less critical and hence contribute to lowering the overall fault probability -- with growing impact for increasing bit number.

All results were obtained within minutes on a MacBook Pro (M2, 2022) with 24~GB RAM.

\end{document}